\documentclass[final]{l4dc2026}

\title[Safe Planning with Robust Conformal
Prediction]{When Environments Shift: Safe Planning with
Generative Priors and Robust Conformal
Prediction}
\usepackage{times}
\usepackage{amsmath}    
\usepackage{amssymb}
\usepackage[nameinlink,noabbrev]{cleveref}
\usepackage{xcolor}
\usepackage[colorinlistoftodos]{todonotes}
\usepackage{algorithm}
\usepackage[noend]{algcompatible}
\usepackage{amsfonts}
\allowdisplaybreaks

\usepackage{booktabs}
\usepackage{pgfplots}
\usepackage{caption}
\usepackage{subcaption}
\pgfplotsset{compat=1.18}

\usepackage{ragged2e}

\author{%
 \Name{Kaizer Rahaman} 
 \Email{autoautokai@gmail.com}\\
 \addr  Department of Electrical Engineering, Indian Institute of Technology Kharagpur, India
 \AND
 \Name{Jyotirmoy V. Deshmukh} 
 \Email{jdeshmuk@usc.edu}\\
 \addr Thomas Lord Department of Computer Science, University of Southern California, USA
 \AND
 \Name{Ashish R. Hota} 
 \Email{ahota@ee.iitkgp.ac.in}\\
 \addr  Department of Electrical Engineering, Indian Institute of Technology Kharagpur, India
 \AND
 \Name{Lars Lindemann} \Email{llindemann@ethz.ch}\\
 \addr Automatic Control Laboratory, ETH Zürich, Switzerland
}

\begin{document}

\maketitle

\begin{abstract}%

Autonomous systems  operate in environments that may change over time. An example is the control of a self-driving vehicle among pedestrians and human-controlled vehicles whose behavior may change based on factors such as traffic density, road visibility, and social norms. Therefore, the environment encountered during deployment rarely mirrors the environment and data encountered during training -- a phenomenon known as distribution shift -- which can undermine the  safety of autonomous systems. Conformal prediction (CP) has recently been used along with  data from the training environment to provide prediction regions that capture the behavior of the environment with a desired probability. When embedded within a model predictive controller (MPC), one can provide probabilistic safety guarantees, but only when the deployment and training environments coincide. Once a distribution shift occurs, these guarantees collapse. We propose a planning framework that is robust under distribution shifts by: (i) assuming that the underlying data distribution of the environment is parameterized by a nuisance parameter, i.e., an observable, interpretable quantity such as traffic density, (ii) training a conditional diffusion model that captures distribution shifts as a function of the nuisance parameter, (iii) observing the nuisance parameter online and generating cheap, synthetic data from the diffusion model for the observed nuisance parameter, and (iv) designing an MPC that embeds CP regions constructed from such synthetic data. Importantly, we account for discrepancies between the underlying data distribution and the diffusion model by using robust CP.  Thus, the plans computed using robust CP enjoy probabilistic safety guarantees, in contrast with plans obtained from a single, static set of training data. We empirically demonstrate safety under diverse distribution shifts in the ORCA simulator.
\end{abstract}

\begin{keywords}%
  Conformal prediction, safe motion planning, distribution shifts, diffusion models
\end{keywords}

\section{Introduction}
\label{sec:introduction}

Autonomous agents increasingly operate in dynamic, open-world environments where deployment-time data can differ from training data—a phenomenon known as \emph{distribution shift}. Such shifts arise from temporal variations, changing environmental conditions, or evolving agent behaviors, and can severely degrade the reliability of learning-based systems. In this paper, we  focus on distribution shifts in dynamic environments consisting of uncontrollable agents, where even mild mismatches between training and deployment conditions can compromise safety  in domains such as autonomous driving~\citep{filos2020can,sikar2024evaluation,arasteh2025validity} and robotics~\citep{paudel2022learning}.

Motion planning has been explored through both reactive and predictive paradigms~\citep{fox2002dynamic,mitsch2013provably,dimarogonas2006feedback,tanner2003nonholonomic}.  Reactive methods respond myopically to an autonomous agent's surrounding but typically lack optimality, whereas predictive methods anticipate future states of uncontrollable agents towards optimality. However, predictions of future states may be unreliable and result in unsafe behavior. In response, \emph{conformal prediction} (CP) was recently used to construct prediction regions that capture uncertainty in predicted future trajectories of uncontrollable agents  with a desired probability ~\citep{shafer2008tutorial,cleaveland2024conformal}. Building on this, \citep{lindemann2023safe} introduced \emph{Conformal MPC}, which integrates  CP regions within an MPC framework to achieve probabilistic safety. However, previously calibrated CP sets may become invalid under distributions shifts, resulting in  safety violations.

\refstepcounter{figure}
\label{fig:overview}

\begin{figure}[ht]
    \centering
    \includegraphics[width=\textwidth]{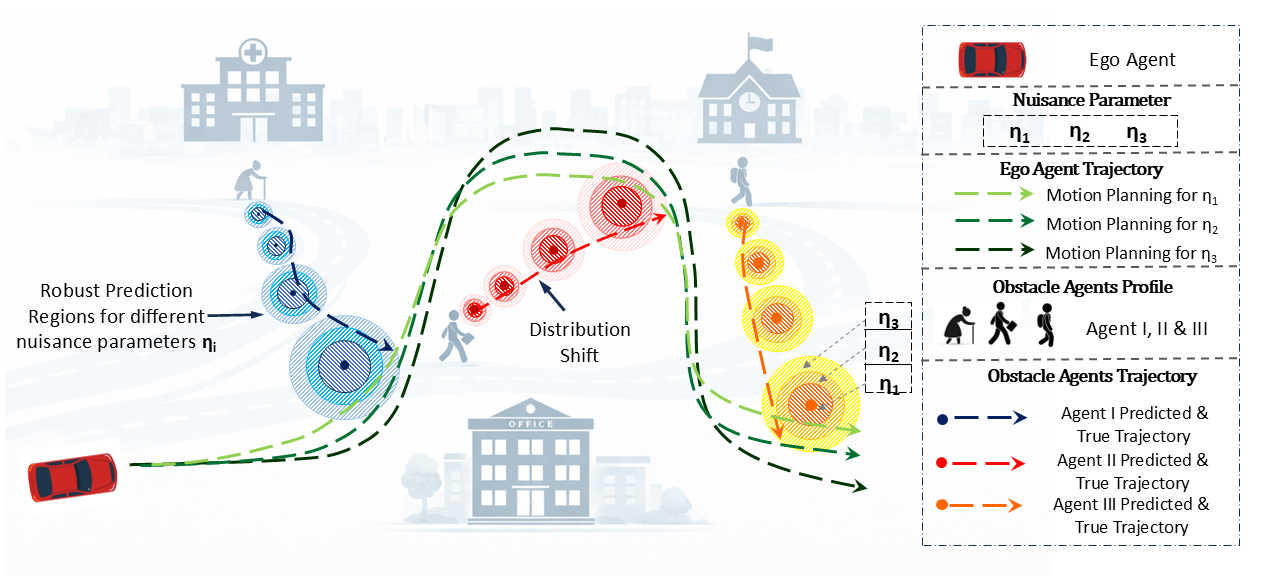}
    \par
    \justifying
    \small
    \noindent Figure \thefigure: The figure shows an ego agent interacting with three dynamic obstacles, illustrating true and predicted trajectories with prediction regions and the ego’s planned trajectories under three nuisance parameters. It highlights that only the trajectory corresponding to the underlying true nuisance parameter achieves collision avoidance, as it provides coverage for the real-time test environment.
\end{figure}

Instead, we propose a \emph{robust conformal prediction} framework for predictive motion planning that ensures safety under distribution shifts. Our underlying assumption is that the distribution of the environment is parameterized by a \emph{nuisance parameter}, i.e., an observable, interpretable quantity such as traffic density  or road visibility, which provides a structured representation of environmental variations. We then utilize a model of natural variation (MNV) ~\citep{robey2020model}  that generates inexpensive synthetic data approximating the deployment-time distribution. This enables the construction of CP regions that extend standard CP guarantees to shifted environments and facilitates safe motion planning, see Figure \ref{fig:overview}. Our contributions are summarized as follows: (1) We train a conditional diffusion model that learns the dependence of the nonconformity score -- the central element in the study of CP -- on the nuisance parameter.  (2) We design an MPC that uses robust CP regions that are constructed from cheap, synthetic data generated by the MNV. These robust CP regions account for the discrepancy between the underlying deployment distribution and the diffusion model and guarantee probabilistic safety. (3) We present empirical results in the ORCA simulator to show valid safety coverage under diverse distribution shifts. In comparison, we illustrate that the method from \cite{lindemann2023safe} results in more safety violations. We also demonstrate the need for robust CP even when we use synthetic calibration data that is close to the deployment data.

\textbf{Related Work.}  Motion planning in dynamic environments has been studied via sampling-based approaches~\citep{kalluraya2022multi,majd2021safe,aoude2013probabilistically,renganathan2023risk} and receding-horizon planning~\citep{wei2022moving,wang2022group,thomas2021probabilistic}. Model Predictive Control (MPC) provides a receding horizon framework that can incorporate robot dynamics, and can be used for collision avoidance, both in static environments~\citep{zhang2019trajectory, zhang2020optimization}, and in dynamic environments through robust and stochastic MPC formulations~\citep{dixit2021risk}. Yet, such approaches rely on accurate knowledge of  the dynamics and strategies of uncontrollable agents, and safety guarantees may degrade when their behavior changes over time. To mitigate this limitation, recent work has introduced distributionally robust motion planning, in which stochastic collision avoidance constraints are required to hold over an {ambiguity set} or a family of probability  distributions~\citep{hakobyan2019risk, navsalkar2023data, zolanvari2023wasserstein}. 

More related to this paper, prior work has embedded CP regions that capture the behavior of uncontrollable agents with a desired probability into the planner by either assuming that the environment does not change \citep{lindemann2023safe,tonkens2023scalable,chen2021reactive} or by using adaptive CP \citep{dixit2023adaptive,yao2024sonic,shin2025egocentric,sheng2024safe}, which provides average-time guarantees that do not ensure pointwise safety as we do by leveraging robust CP and cheap synthetic data generated by a MNV. CP was also used for reinforcement learning \citep{yao2024sonic,sun2023conformal}, control barrier functions \citep{zhang2025conformal,hsu2025statistical}, and sampling-based search \citep{sheng2024safe},  see \citep{lindemann2025formal} for a survey.

\section{Problem Formulation}
\label{sec:formulation}

\textbf{System Description.} We consider a discrete-time dynamical system of the form
\begin{equation}
    x_{t+1} = f(x_t, u_t), \qquad x_0 = \zeta,
    \label{eq:system-dynamics}
\end{equation}
where $x_t \in \mathcal{X} \subseteq \mathbb{R}^n$ denotes the (ego)  agent's state at time~$t$, $u_t \in \mathcal{U} \subseteq \mathbb{R}^m$ is the control input, and $f:\mathbb{R}^n\times\mathbb{R}^m\!\to\!\mathbb{R}^n$ represents the system dynamics. 
The ego  agent operates in a dynamic environment populated by $N$ uncontrollable agents with states $Y_t = (Y_{t,1},\dots,Y_{t,N}) \in \mathbb{R}^{Nn}$ whose  trajectories $(Y_0, Y_1, \dots)$ are unknown and evolve according to the stochastic process   $(Y_0, Y_1, \dots) \sim \mathcal{D}_{\mathrm{test}}$
where $\mathcal{D}_{\mathrm{test}}$ is  an (unknown) trajectory distribution encountered during online deployment.

\textbf{Training Distribution.} Let $\mathcal{D}_{\mathrm{train}}$ denote the nominal (offline) distribution from which data can be collected in the form of trajectories to train and calibrate (via conformal prediction) a predictor for the behavior of uncontrollable agents. 
We refer to a \emph{distribution shift} when $\mathcal{D}_{\mathrm{test}} \neq \mathcal{D}_{\mathrm{train}}$, that is, when trajectory data encountered during deployment deviates from data collected for training. We measure such distribution shifts via a statistical distance $\mathsf{d}(\mathcal{D}_{\mathrm{test}}, \mathcal{D}_{\mathrm{train}})$, where $\mathsf{d}$ may denote the Wasserstein distance, KL divergence, or total variation metric. In our setting, distribution shifts arise possibly due to changes in traffic density, road visibility, social norms and couplings between agents.  Consequently, safety guarantees for $\mathcal{D}_{\mathrm{train}}$ may no longer hold under $\mathcal{D}_{\mathrm{test}}$.

\textbf{Idealized Safe Planning Problem.} Let $c:\mathbb{R}^n\times\mathbb{R}^{Nn}\!\to\!\mathbb{R}$ be a Lipschitz continuous constraint function that indicates safety, e.g., the collision avoidance constraint $c(x_t,Y_t)=\min_j\|x_t-Y_{t,j}\|-\epsilon\ge 0$ imposes that the distance between the states (e.g., positions) of the ago agent and any other (obstacle) agent is at least some safety margin $\epsilon$. We aim to optimize performance while maintaining safety with high probability, leading to a chance-constrained optimization problem:
\begin{subequations}\label{eq:ideal-opt}
\begin{align}
    \min_{u_0,\dots,u_{T-1}} \;  J(x,u) \quad 
    \text{s.t. } \;
    & x_{t+1} = f(x_t,u_t), \quad t=0,\dots,T-1, \\
    & \mathbb{P}_{\mathcal{D}_{\mathrm{test}}}
      \!\big( c(x_t,Y_t)\ge 0,\;\forall t\in\{0,\dots,T\}\big)
      \ge 1-\delta, \label{eq:chance}
\end{align}
\end{subequations}
where $J(x,u)$ denotes a user-defined cost (e.g., tracking error or energy consumption), $\delta\in(0,1)$ is a risk tolerance parameter, and $\mathcal{D}_{\mathrm{test}}$ represents the actual (possibly shifted) distribution of the environment during deployment. Solving~\eqref{eq:ideal-opt} exactly is generally intractable since $\mathcal{D}_{\mathrm{test}}$ is unknown and may differ from the nominal training distribution $\mathcal{D}_{\mathrm{train}}$. Furthermore, the stochastic process $\mathcal{D}_{\mathrm{test}}$ is high-dimensional and nonlinear, making the chance constraint computationally prohibitive. 

We adopt a \emph{data-driven} uncertainty quantification approach using \emph{conformal prediction} (CP). Conceptually, we follow \cite{lindemann2023safe} in  that we use trajectory predictors for uncontrollable agents $Y$, construct  prediction regions with CP using calibration data from $\mathcal{D}_{\mathrm{train}}$, and then use these to design a model predictive controller for \eqref{eq:system-dynamics}.  In the absence of distribution shifts, guarantees of the form $\mathbb{P}_{\mathcal{D}_{\mathrm{test}}}^{K+1}
      \!\big( c(x_t,Y_t)\ge 0,\;\forall t\in\{0,\dots,T\}\big)
      \ge 1-\delta$ can be obtained where $K$ is the number of calibration data.\footnote{Note that the $\mathbb{P}_{\mathcal{D}_{\mathrm{test}}}^{K+1}$ is the $(K+1)$-fold product probability measure of $\mathbb{P}_{\mathcal{D}_{\mathrm{test}}}$ -- their relation and  the accuracy to which the chance constraint in \eqref{eq:chance} is approximated is discussed in more detail \cite{lindemann2025formal}.}  Due to distribution shifts, however, we cannot simply use calibration data from $\mathcal{D}_{\mathrm{train}}$, motivating the need for robust methods that account for  distribution shifts.

\section{Preliminaries}
\label{sec:prelim}

\textbf{Vanilla Conformal Prediction.}   Let $R^{(0)},\ldots,R^{(K)} \sim \mathcal{R}_\mathrm{test}$ be $K+1$ i.i.d. random variables, referred to as the non-conformity score.  For instance, in a supervised learning setting with inputs $X^{(k)}$ and outputs $Z^{(k)}$, a natural choice is $R^{(k)} := \| Z^{(k)} - \mu(X^{(k)}) \|$ where $\mu$ denotes the predictor; larger values of the nonconformity score indicate poorer predictive performance. The goal of CP is to compute a bound $C$ from the calibration data $R^{(1)},\ldots,R^{(K)}$ such that the test data $R^{(0)}$ satisfies $\mathbb{P}_{\mathcal{R}_\mathrm{test}}^{K+1}(R^{(0)} \le C) \ge 1-\delta$, where $1-\delta$ is a user-specified confidence level. Specifically, $C := \text{Quantile}_{(1+1/K)(1-\delta)}(R^{(1)},\ldots,R^{(K)})$, i.e., the corrected $(1-\delta)$-quantile of the empirical distribution over the calibration data, achieves the above probabilistic guarantee \citep{vovk2005algorithmic}.

\textbf{Robust Conformal Prediction.}  When a distribution shift occurs, standard conformal prediction (CP) coverage guarantees may no longer hold. Let $R^{(1)}, \dots, R^{(K)} \sim \mathcal{R}_{\mathrm{train}}$ and $R^{(0)} \sim \mathcal{R}_{\mathrm{test}}$ be independent, where the nonconformity scores are induced by $\mathcal{D}_{\mathrm{train}}$ and $\mathcal{D}_{\mathrm{test}}$, respectively. Robust Conformal Prediction (RCP) seeks to ensure valid coverage for any $\mathcal{R}_{\mathrm{test}}$ within an ambiguity set $
\mathcal{U}_{r} = \{\; \mathcal{Q} : \mathsf{d}(\mathcal{Q}, \mathcal{R}_{\mathrm{train}}) \le r \;\}$
where $r>0$ bounds the admissible shift magnitude~\citep{cauchois2024robust,aolaritei2025conformal}. The robust coverage constraint can be written as
\begin{equation}
    \inf_{\mathcal{R}_{\mathrm{test}} \in \mathcal{U}_{r}}
    \mathbb{P}_{\mathcal{R}_{\mathrm{test}},\mathcal{R}_{\mathrm{train}}}^{K+1}
    \big( R^{(0)} \le C^{\mathrm{rob}} \big)
    \ge 1 - \delta,
    \label{eq:robust_cp_constraint}
\end{equation}
where $C^{\mathrm{rob}}$ denotes a \emph{robust quantile}, often expressed as an inflation of the empirical quantile $C$  as:
\begin{equation}
    C^{\mathrm{rob}} = C + \Delta_\mathsf{d}(r;  R^{(1)}, \dots, R^{(K)}).
    \label{eq:robust_additive_form}
\end{equation}
Here, $\Delta_\mathsf{d}(r; R^{(1)}, \dots, R^{(K)})$ represents a robustness correction that depends on the ambiguity radius $r$ and, in general, on the nonconformity scores from the calibration data $R^{(1)}, \dots, R^{(K)}$. The form of $\Delta_\mathsf{d}(\cdot)$ depends on how the distributional uncertainty is modeled: (i) in the $f$-divergence formulation \citep{cauchois2024robust}, $\Delta_\mathsf{d}$ is implicitly determined through a convex optimization problem over distributions constrained by an $f$-divergence ambiguity set; and (ii) in the L\'evy--Prokhorov (LP) formulation \citep{aolaritei2025conformal}, $\Delta_\mathsf{d}$ arises from coupled perturbation radii $(\varepsilon,\rho)$ associated with local (infimum-Wasserstein) and global (total-variation) shifts. For clarity of exposition, we use the simplified additive form~\eqref{eq:robust_additive_form} to represent the effective robustness margin, while acknowledging that, in general, $\Delta_\mathsf{d}$ may depend on multiple factors such as the divergence type, coupled radii (LP metric) and  calibration samples. We refer the reader to Appendix~\ref{app:robust_cp_details} for a detailed treatment.

\textbf{Conditional Diffusion Models.} We use conditional diffusion models  for modeling conditional data distributions ~\citep{ho2020denoising, han2022card}. Let $s_0$ be a variable that we aim to describe, later corresponding to trajectories $Y$ and nonconformity scores $R$. The variable $s_0$  may depend on a contextual parameters $\mathbf{c}$, later corresponding to our nuisance parameter. Conditional diffusion models  are distributions \(p_\theta(s_0 \mid \mathbf{c})\) which are parameterized by a variable $\theta$ that will be learned.  Following the classification and regression diffusion (CARD) models \citep{han2022card}, the contextual variable 
\(\mathbf{c}\) is passed through a pre-trained conditional mean encoder \(f_{\phi}(\mathbf{c})\) which estimates $\mathbb{E}[s_0\mid \mathbf{c}]$ to anchor the diffusion process around a deterministic conditional mean, allowing the generative model to focus on modeling residual uncertainty. These models construct a generative process by progressively corrupting data  with Gaussian noise and learning to reverse this process. Full expressions for this forward (noising) process \(q(s_{j} \mid s_0,  \mathbf{c}\)) and variance schedule \(\beta_{j}\) are provided in Appendix~\ref{app:dif_model}, where $j$ denotes the $j$-th diffusion step. The reverse (denoising) process parametrizes a conditional Gaussian transition:
\begin{equation}
\small{    p_\theta(s_{j-1} \mid s_{j}, \mathbf{c})
    =
    \mathcal{N}\!\left(
        \mu_\theta(s_{j}, j, f_{\phi}(\mathbf{c})),
        \Sigma_{j}
    \right),
    \;
    \mu_\theta(s_{j}, j, f_{\phi}(\mathbf{c}))
    =
    \frac{1}{\sqrt{\alpha_{j}}}
    \big(
        s_{j}
        - \frac{\beta_{j}}{\sqrt{1-\bar{\alpha}_{j}}}\,
          \epsilon_\theta(s_{j}, j, f_{\phi}(\mathbf{c}))
    \big),}
    \label{eq:p_theta_cond}
\end{equation}
where $\epsilon_\theta(s_j, j, f_{\phi}(\mathbf{c}))$ predicts the additive noise introduced at diffusion step $j$. Here, $\alpha_j = 1 - \beta_j$ represents the per-step noise retention factor and $\bar{\alpha}_j = \prod_{k=1}^{j} \alpha_k$ is its cumulative product controlling the effective signal-to-noise ratio across steps, see \citep{ho2020denoising} for details. This model thus learns a smooth, context-dependent score field over the contextual parameter \(\mathbf{c}\) which is able to approximate underlying, unknown distributions. More details are provided in Appendix~\ref{app:dif_model}.

\section{Safe Planning with Generative Priors and Robust Conformal Prediction}
\label{sec:RCP_TP}

Our central premise is that  the uncontrollable trajectories are not governed by a single, fixed  distribution $\mathcal{D}_{\mathrm{test}}$, but by a family of distributions indexed by underlying latent factors $\zeta$ -- termed \emph{natural variations} -- that capture phenomena such as environmental changes or shifts in agent behavior that cannot be directly observed but systematically alter the data distribution. These  latent factors parameterize a conditional data distribution $\mathcal{D}_{\mathrm{test}}(\zeta)$ that may be encountered during deployment. In practice, however, the latent factor $\zeta$ may not be directly observable. We introduce an \emph{observable nuisance parameter} $\eta$ which is a function of the latent factor $\zeta$, i.e., the nuisance parameter is governed by  $\eta(\zeta)$ where we omit dependency on $\zeta$ when appropriate. We use $\eta$ to learn a distribution that serves as a proxy for these latent factors  and  data-generating distribution. Indeed,  we present two variations in Sections \ref{sec:4_1} and \ref{sec:4_2} where we learn trajectory and nonconformity score distributions $\mathcal{D}_{\mathrm{train}}(\eta)$ and $\mathcal{R}_{\mathrm{train}}(\eta)$ to approximate $\mathcal{D}_{\mathrm{test}}(\zeta)$ and $\mathcal{R}_{\mathrm{test}}(\zeta)$ so that, ideally, we obtain $\mathcal{D}_{\mathrm{test}}(\zeta) {=}\mathcal{D}_{\mathrm{train}}(\eta(\zeta))$ and $\mathcal{R}_{\mathrm{test}}(\zeta) {=}\mathcal{R}_{\mathrm{train}}(\eta(\zeta))$. The nuisance parameter acts as a contextual descriptor of the current operating regime, and it may correspond to traffic density, road visibility, and others. Our approach will use this nuisance parameter $\eta$ to deal with distribution shifts and follow a three step procedure. At each time, we predict the environment, construct robust prediction regions from  calibration data sampled from  $\mathcal{D}_{\mathrm{train}}(\eta)$ or $\mathcal{R}_{\mathrm{train}}(\eta)$, and solve an MPC iteratively.

 \subsection{Robust Prediction Regions for Uncontrollable Agents with Generative Trajectory Priors} \label{sec:4_1} We now present the construction of prediction regions that contain  uncontrollable agents with a  probability of at least $1-\delta$. We use robust conformal prediction  with calibration data  from the training distribution $\mathcal{D}_{\mathrm{train}}(\eta)$. The distribution $\mathcal{D}_{\mathrm{train}}(\eta)$ here corresponds to a conditional diffusion model \(p_\theta(Y \mid \mathbf{c})\) with observable conditioning variables $\mathbf{c}:=\eta$. The model \(p_\theta(Y \mid \mathbf{c})\) can be trained with data from $\mathcal{D}_{\mathrm{test}}(\zeta)$ for a single or a range of values of $\zeta$. Ideally though, data used for training should have a diverse mix of data from different latent factors $\zeta$.  Alternatively, the data used to train \(p_\theta(Y \mid \mathbf{c})\) could come from an open access dataset \citep{o2024open}.

\textbf{Trajectory Prediction. } At each timestep $t$, we use a learned trajectory predictor to produce $\tau$-step-ahead predictions $\hat{Y}_{\tau \mid t}$ of the environment state $Y_\tau$ for all prediction times $\tau \in \{t+1, \dots, t+H\}$, where $H$ denotes the MPC prediction horizon. The predictor can be any type of predictor and could be trained on the same data that we have trained the training distribution $\mathcal{D}_{\mathrm{train}}(\eta)$ on. As such, we can even train a conditional predictor that provides predictions $\hat{Y}_{\tau \mid t}(\eta)$ conditioned on $\eta$. Our goal now is to compute constants $C_{\tau\mid t,i}^{\mathrm{rob}}\ge 0$ that define prediction regions of the form
\begin{equation}
\mathbb{P}_{\mathcal{R}_{\mathrm{test}},\mathcal{R}_{\mathrm{train}}}^{K+1}
    \big( \|\hat{Y}_{\tau \mid t,i}- Y_{\tau,i}||\le C_{\tau\mid t,i}^{\mathrm{rob}}, \; \forall (t,\tau,i)\in \mathcal{S}  \big)
    \ge 1 - \delta
    \label{eq:robust_CP_regions}
\end{equation}
where $\mathcal{S}=\{1,\hdots,T-1\}\times \{t+1,\hdots,t+H\} \times \{1,\hdots,N\}$. We note that the prediction region in \eqref{eq:robust_CP_regions} provides simultaneous coverage over all time steps $t$, predictions $\tau$, and agents $i$. Nonconformity scores for simultaneous coverage were proposed in \citep{cleaveland2024conformal,sun2022copula}, but fail to provide coverage under distribution shift. The authors in \citep{zhao2024robust}  provided simultaneous coverage under distribution shifts captured by $f$-divergence measures. Here, we follow a similar idea, but  permit distribution shifts captured by the L\'evy--Prokhorov metric while using synthetic data from a conditional generative model to minimize conservatism which is typically induced by robust CP methods -- these advantages  will later be illustrated in our experiments.

\textbf{Nonconformity score.} We can obtain the guarantees in equation \eqref{eq:robust_CP_regions}  by the nonconformity score 
\begin{align}\label{eq:nonconformity}
    R^{(k)}=\max_{(t,\tau,i)\in\mathcal{S}} \frac{\|\hat{Y}_{\tau \mid t,i}^{(k)}- Y_{\tau,i}^{(k)}||}{\sigma_{\tau \mid t,i}}
\end{align}
where $Y^{(1)}, \hdots,Y^{(K)}\sim \mathcal{D}_{\mathrm{train}}(\eta)$ are $K$ training trajectories, $\hat{Y}_{\tau \mid t,i}^{(1)},\hdots, \hat{Y}_{\tau \mid t,i}^{(K)}$ are the corresponding predictions, and $\sigma_{\tau \mid t,i}>0$ are normalization constants, see e.g., \citep{cleaveland2024conformal,yu2026signal} for their computation. The choice of $\sigma_{\tau \mid t,i}$ does not affect validity of \eqref{eq:robust_CP_regions}, but well chosen $\sigma_{\tau \mid t,i}$ result in small prediction regions.  This nonconformity score quantifies the discrepancy between the predicted trajectory $\hat{Y}_{\tau \mid t,i}^{(k)}$ and the true trajectory $Y_{\tau,i}^{(k)}$ for each time step $t$, prediction $\tau$, and agent $i$. If there is no distribution shift,  the choice  $C_{\tau\mid t,i}^{\mathrm{rob}}:=C\sigma_{\tau \mid t,i}$ ensures that \eqref{eq:robust_CP_regions} holds.

\textbf{Distribution shift for the Nonconformity score.} 
 The nonconformity scores $\{R^{(i)}\}_{i=1}^K$ follow a nonconformity score distribution $\mathcal{R}_{\mathrm{train}}$, i.e., $R^{(1)},\hdots, R^{(K)}\sim \mathcal{R}_{\mathrm{train}}$, which is induced by the  distribution $\mathcal{D}_{\mathrm{train}}$.\footnote{We drop the dependency on the latent factor $\zeta$ and the nuisance parameter $\eta$ when clear from the context.} However, we need to consider the  nonconformity score distribution $\mathcal{R}_{\mathrm{test}}$ for the nonconformity score $R^{(0)}$ during deployment for which $\mathcal{R}_{\mathrm{train}} \neq \mathcal{R}_{\mathrm{test}}$ may hold. To apply robust conformal prediction as introduced in Section \ref{sec:prelim}, we need to know  the statistical distance $\mathsf{d}\big(\mathcal{R}_{\mathrm{test}}, \mathcal{R}_{\mathrm{train}}\big)$, which we can estimate in practice or derive analytical bounds in the case of diffusion models (details provided later). The next result follows directly by applying robust conformal prediction and is a generalization of \citep[Lemma 3]{zhao2024robust} by allowing more general statistical distances, such as the Wasserstein and L\'evy--Prokhorov metric from~\citep{aolaritei2025conformal}.
 \begin{lemma}\label{cor1}
    Let $\delta\in(0,1)$ be a failure probability and $Y^{(0)} \sim \mathcal{D}_{\mathrm{test}}(\zeta)$ and $Y^{(1)},\hdots, Y^{(K)} \sim \mathcal{D}_{\mathrm{train}}(\eta)$ be test and training trajectories for the latent factor $\zeta$  and nuisance parameter  $\eta$. Let $R^{(0)}\sim\mathcal{R}_{\mathrm{test}}$ and $R^{(1)},\hdots, R^{(K)}\sim \mathcal{R}_{\mathrm{train}}$ follow \eqref{eq:nonconformity} where $\mathcal{R}_{\mathrm{test}}$ and $ \mathcal{R}_{\mathrm{train}}$ are the distributions induced by $\mathcal{D}_{\mathrm{test}}$ and $\mathcal{D}_{\mathrm{train}}$.  If $r\ge 0$ is such that $\mathsf{d}\big(\mathcal{R}_{\mathrm{test}}, \mathcal{R}_{\mathrm{train}}\big)\le r$, then the choice of $C_{\tau\mid t,i}^{\mathrm{rob}}:=\sigma_{\tau \mid t,i}\cdot(C+\Delta_\mathsf{d}(r;R^{(1)}, \dots, R^{(K)}))$ guarantees that the prediction region in \eqref{eq:robust_CP_regions} holds. 
    \begin{proof}
        Since $\mathsf{d}\big(\mathcal{R}_{\mathrm{test}}, \mathcal{R}_{\mathrm{train}}\big)\le r$ holds, robust conformal prediction  guarantees via \eqref{eq:robust_cp_constraint} that 
        \begin{equation}\label{eq:proof}
\mathbb{P}_{\mathcal{R}_{\mathrm{test}},\mathcal{R}_{\mathrm{train}}}^{K+1}
    \big( R^{(0)} \le C+\Delta_\mathsf{d}(r;R^{(1)}, \dots, R^{(K)}) \big)
    \ge 1 - \delta.
\end{equation}
Plugging $R^{(0)}$ from equation \eqref{eq:nonconformity} into \eqref{eq:proof} gives us that
        \begin{align}
    &\mathbb{P}_{\mathcal{R}_{\mathrm{test}},\mathcal{R}_{\mathrm{train}}}^{K+1}
    \Big( \max_{(t,\tau,i)\in\mathcal{S}} \frac{\|\hat{Y}_{\tau \mid t,i}^{(0)}- Y_{\tau,i}^{(0)}||}{\sigma_{\tau \mid t,i}} \le C+\Delta_\mathsf{d}(r;R^{(1)}, \dots, R^{(K)}) \Big)
    \ge 1 - \delta\\
    \Leftrightarrow\;\;\; &\mathbb{P}_{\mathcal{R}_{\mathrm{test}},\mathcal{R}_{\mathrm{train}}}^{K+1}
    \Big( \|\hat{Y}_{\tau \mid t,i}^{(0)}- Y_{\tau,i}^{(0)}|| \le \sigma_{\tau \mid t,i}(C+\Delta_\mathsf{d}(r;R^{(1)}, \dots, R^{(K)})), \; \forall (t,\tau,i)\in \mathcal{S}  \Big)
    \ge 1 - \delta
\end{align}
from which equation \eqref{eq:robust_CP_regions} follows by the choice of $C_{\tau\mid t,i}^{\mathrm{rob}}:=\sigma_{\tau \mid t,i}\cdot(C+\Delta_\mathsf{d}(r;R^{(1)}, \dots, R^{(K)}))$.
    \end{proof}

\end{lemma}

 It is easy to see that excessively large bounds $\Delta_\mathsf{d}(r;R^{(1)}, \dots, R^{(K)})$ in Lemma \ref{cor1} can lead to overly conservative prediction regions, which in turn can make the resulting MPC controller (presented in a later section) conservative. For instance, this may be the case when only a fixed distribution $\mathcal{D}_{\mathrm{train}}$ is used, e.g., as in \citep{zhao2024robust}. Effectively, this case corresponds to a single nuisance parameter and motivates our approach where we can generate cheap, synthetic data from a conditional diffusion model $\mathcal{D}_{\mathrm{train}}(\eta)$ that can reduce the bound $\Delta_\mathsf{d}(r;R^{(1)}, \dots, R^{(K)})$.

\subsection{Efficient Prediction Regions with Generative Nonconformity Score Priors}
\label{sec:4_2} Generating trajectories with diffusion models is computationally expensive and time consuming, potentially prohibitive for real-time application. Instead of generating high-dimensional trajectory data via a learned distribution $\mathcal{D}_{\mathrm{train}}(\eta)$, we here propose to generate nonconformity scores via learning the distribution $\mathcal{R}_{\mathrm{train}}(\eta)$ directly.  Indeed, we let $\mathcal{R}_{\mathrm{train}}(\eta)$ be based on a conditional diffusion model $p_\theta(\bar{R} \mid \mathbf{c})$ where now the conditioning variable is $\mathbf{c} = [\, \eta,\, t,\, \tau ,\, i \,]$. This is so that  $p_\theta(\bar{R} \mid \mathbf{c})$ models the prediction error $\|\hat{Y}_{\tau \mid t,i}^{(0)}- Y_{\tau,i}^{(0)}||$ for trajectories  $Y^{(0)}\sim \mathcal{D}_{\mathrm{test}}(\zeta)$.  We note that we condition not only on the nuisance parameter $\eta$, but also on the current time $t$, predictions $\tau$, and agents $i$. This way, we can approximate the nonconformity score in  \eqref{eq:nonconformity} as
\begin{align}\label{eq:nonconformity_42}
    R^{(k)}=\max_{(t,\tau,i)\in\mathcal{S}} \frac{\bar{R}_{\tau|t,i}^{(k)}}{\sigma_{\tau \mid t,i}}
\end{align}
where $\bar{R}_{\tau|t,i}^{(k)}$ is generated by $p_\theta(\bar{R} \mid \mathbf{c})$, i.e., $\bar{R}_{\tau|t,i}^{(k)}\sim p_\theta(\bar{R} \mid \mathbf{c})$. Training the diffusion model $p_\theta(\bar{R} \mid \mathbf{c})$ follows again the process described in Section \ref{sec:prelim}, with details presented in Appendix~\ref{app:dif_model}. Data used for training should  again have a diverse mix of data with different conditioning variables $\mathbf{c}$ to learn a continuous family of conditional distributions that generalize to unseen nuisance parameters, as we demonstrate in our experiments. The next lemma summarizes our results when the calibration nonconformity scores $R^{(1)},\hdots, R^{(K)}$ follow equation \eqref{eq:nonconformity_42} and not equation \eqref{eq:nonconformity}, while the test nonconformity score $R^{(0)}$ again follows \eqref{eq:nonconformity}. Since the test nonconformity score $R^{(0)}$ still follows \eqref{eq:nonconformity}, the proof of the lemma follows almost exactly the  same steps and is omitted.

 \begin{lemma}\label{cor2}
    Let $\delta\in(0,1)$ be a failure probability and $Y^{(0)} \sim \mathcal{D}_{\mathrm{test}}(\zeta)$ be the test trajectory for the latent factor $\zeta$. Let $R^{(0)}\sim\mathcal{R}_{\mathrm{test}}$ follow \eqref{eq:nonconformity} where $\mathcal{R}_{\mathrm{test}}$ is the distribution induced by $\mathcal{D}_{\mathrm{test}}$  and $R^{(1)},\hdots, R^{(K)}\sim \mathcal{R}_{\mathrm{train}}$ follow \eqref{eq:nonconformity_42} where $\mathcal{R}_{\mathrm{train}}$ is the distribution induced by $p_\theta(R \mid \mathbf{c})$ with $\mathbf{c} = [\, \eta,\, t,\, \tau ,\, i \,]$. If $r\ge 0$ is such that $\mathsf{d}\big(\mathcal{R}_{\mathrm{test}}, \mathcal{R}_{\mathrm{train}}\big)\le r$, then the choice of $C_{\tau\mid t,i}^{\mathrm{rob}}:=\sigma_{\tau \mid t,i}\cdot(C+\Delta_\mathsf{d}(r;R^{(1)}, \dots, R^{(K)}))$ guarantees that the prediction region in \eqref{eq:robust_CP_regions} holds. 
\end{lemma}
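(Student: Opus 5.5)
The argument mirrors the proof of Lemma~\ref{cor1}; the only change is where the calibration scores come from, and the robust conformal prediction guarantee \eqref{eq:robust_cp_constraint} does not depend on that. The guarantee needs three things: a test score $R^{(0)}\sim\mathcal{R}_{\mathrm{test}}$; calibration scores $R^{(1)},\dots,R^{(K)}$ drawn i.i.d.\ from some distribution $\mathcal{R}_{\mathrm{train}}$ and independent of $R^{(0)}$; and the distance bound $\mathsf{d}(\mathcal{R}_{\mathrm{test}},\mathcal{R}_{\mathrm{train}})\le r$. What generates the calibration scores does not matter.

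The first step is to check these hypotheses in the present setting. Each calibration score $R^{(k)}$ in \eqref{eq:nonconformity_42} is a deterministic function, namely a normalized maximum over $\mathcal{S}$, of the samples $\bar{R}^{(k)}_{\tau|t,i}\sim p_\theta(\bar R\mid\mathbf{c})$ with $\mathbf{c}=[\eta,t,\tau,i]$. I would draw these samples independently across $k$, and independently of the test trajectory $Y^{(0)}$. Then $R^{(1)},\dots,R^{(K)}$ are i.i.d.\ from the pushforward distribution $\mathcal{R}_{\mathrm{train}}$ and independent of $R^{(0)}$. Within a single $k$ the joint law of the samples over $(t,\tau,i)$ can be arbitrary, since only the scalar maximum enters. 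The test score $R^{(0)}$ is still computed from $Y^{(0)}\sim\mathcal{D}_{\mathrm{test}}(\zeta)$ through \eqref{eq:nonconformity}, so its law is $\mathcal{R}_{\mathrm{test}}$. The distance bound is assumed in the statement.

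The second step is to apply \eqref{eq:robust_cp_constraint} with $C^{\mathrm{rob}}=C+\Delta_\mathsf{d}(r;R^{(1)},\dots,R^{(K)})$. Because $\mathcal{R}_{\mathrm{test}}\in\mathcal{U}_r$, this gives \eqref{eq:proof}. I would then substitute the definition \eqref{eq:nonconformity} of $R^{(0)}$ and use two facts. First, $\max_{s\in\mathcal{S}} a_s\le b$ holds if and only if $a_s\le b$ for every $s\in\mathcal{S}$. Second, each $\sigma_{\tau|t,i}>0$, so multiplying through by it preserves the inequality. This turns \eqref{eq:proof} into the simultaneous event in \eqref{eq:robust_CP_regions} with $C^{\mathrm{rob}}_{\tau|t,i}=\sigma_{\tau|t,i}(C+\Delta_\mathsf{d})$. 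The rewrite is pointwise on the sample space, so the probability is unchanged.

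The main obstacle is the first step, not the algebra. One must make sure that $\mathcal{R}_{\mathrm{train}}$ is defined as the law of the maximum of the generated samples, not of the individual $\bar{R}$ samples, so that the distance hypothesis is imposed on the same object that enters the robust conformal prediction guarantee. One must also make sure that the independence and i.i.d.\ structure of the synthetic samples is stated explicitly. Once these are fixed, the remaining steps are identical to those in Lemma~\ref{cor1}.
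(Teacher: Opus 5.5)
Your proposal is correct and follows the same route the paper takes. The paper omits the proof of Lemma~\ref{cor2} because it repeats Lemma~\ref{cor1}: invoke \eqref{eq:robust_cp_constraint} to obtain \eqref{eq:proof}, then unpack the maximum in \eqref{eq:nonconformity} using $\sigma_{\tau\mid t,i}>0$. Your explicit checks — that $\mathcal{R}_{\mathrm{train}}$ is the law of the maximum in \eqref{eq:nonconformity_42}, and that the synthetic scores are i.i.d.\ and independent of $R^{(0)}$ with $\eta$ and $\theta$ held fixed — make precise hypotheses the paper leaves implicit.
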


    The nonconformity score in \eqref{eq:nonconformity_42}, used for calibration, does not depend on  the trajectory predictor. The nonconformity score \eqref{eq:nonconformity}, used during deployment, on the other hand does and thereby defines prediction regions with center $\hat{Y}_{\tau \mid t,i}$ and size $C_{\tau\mid t,i}^{\mathrm{rob}}$, enabling control design in Section \ref{sec:MPC}.

 \subsection{Estimating Distribution Shifts for Generative Priors}
 
 Without any prior knowledge, it is not possible to know the distribution shift $r$ which is ultimately required to apply Lemmas \ref{cor1} and \ref{cor2}. Note also that $r$ may be different for different contexts $\mathbf{c}$. Generally, $r$ could be treated as a tuning knob that can be empirically adjusted to add robustness margins to the algorithms, e.g., as in robust control \citep{zhou1998essentials}. We instead aim to estimate the distribution shift  $r$ here.  For this purpose, we focus on estimating $r$ for Lemma \ref{cor2}, which we also implement in our experiments, while similar methods apply to estimating $r$ for Lemma \ref{cor1}. 
 
\textbf{Data under context $\mathbf{c}$ available.} We will now propose two methods to estimate $r$ when some test data  available. In the first method, we assume to have a few trajectories from $\mathcal{D}_{\mathrm{test}}(\zeta)$ -- and consequently $\mathcal{R}_{\mathrm{test}}(\zeta)$ -- available so that we can estimate $r$ directly, e.g., using \citep[Algorithm 1]{aolaritei2025conformal} for the LP metric or \citep{rubenstein2019practical} for the $f$-divergence. 

In the second method, we  obtain an upper bound for $r$ by quantifying the discrepancy between the true conditional distribution of the prediction error $\|\hat{Y}_{\tau \mid t,i}^{(k)}- Y_{\tau,i}^{(k)}||$ for trajectories  $Y^{(k)}\sim \mathcal{D}_{\mathrm{test}}(\zeta)$, which we denote by $\bar{\mathcal{R}}_{\mathrm{test}}(\zeta)$, and that generated by our diffusion model $p_\theta(\bar{R} \mid \mathbf{c})$. We employ the analytical 2-Wasserstein bound derived in~\citep[Theorem 1]{li2025non}, which states that
\begin{equation}\label{eq:analytic:bound}
    W_2\!\big(\bar{\mathcal{R}}_{\mathrm{test}}(\zeta),\,p_\theta(\bar{R}\mid \mathbf{c})\big)
    \;\le\;
    \sum_{j=1}^{T_{\sf diff}} \beta_j\,M(j)\,\sqrt{H(j)},
    \;
    M(j)\!=\!\exp\!\Big(\sum_{s\le j}\!(L_1(s)\!+\!L_2(s)\beta_s)\Big)
\end{equation}
where \(H(j)\) represents the normalized mean square error of the model's predicted diffusion noise, which we estimate on a small held-out validation set $\bar{R}_{\mathrm{test}}(\zeta)$ drawn from $\bar{\mathcal{R}}_{\mathrm{test}}(\zeta)$, $L_1,L_2$ are drift regularity constants of the forward process, and $T_{\sf diff}$ is the total number of diffusion steps. We refer the reader to Appendix~\ref{app:analytical_bound} for  details.   As we obtain an upper bound for $\|\hat{Y}_{\tau \mid t,i}^{(k)}- Y_{\tau,i}^{(k)}||$ this way, the choice of $C_{\tau\mid t,i}^{\mathrm{rob}}:=\sigma_{\tau\mid t,i}\Big(C+\Delta_{\mathsf d}\!\left(\tfrac{\varepsilon_W}{\sigma_{\min}};\,R^{(1)},\dots,R^{(K)}\right)\Big)$ with $\sigma_{\min}:=\min_{(t,\tau,i)\in\mathcal{S}}\sigma_{\tau\mid t,i}$ and $\varepsilon_W:=\sum_{j=1}^{T_{\sf diff}} \beta_j\,M(j)\,\sqrt{H(j)}$ now makes Lemma \ref{cor2} valid, see Appendix \ref{app:analytical_bound_CP} for a proof. 

\textbf{No data under context $\mathbf{c}$ available.} We may encounter a new context $\mathbf{c}^\star=(\eta^\star,\ldots)$ where we have no test data available under the nuisance parameter $\eta^\star$. Assume that we have datasets  $R_{\mathrm{test}}(\mathbf{c})$ or $\bar{R}_{\mathrm{test}}(\mathbf{c})$ drawn from $\mathcal{R}_{\mathrm{test}}(\zeta)$ or $\bar{\mathcal{R}}_{\mathrm{test}}(\zeta)$ available, respectively,  corresponding to the first or second method presented previously. To handle an ``unseen'' contexts $\mathbf{c}^\star$, we construct a context-specific test dataset $R_{\mathrm{test}}(\mathbf{c}^\star)$ (or $\bar{R}_{\mathrm{test}}(\mathbf{c}^\star)$) by interpolating between nearby seen contexts. We do so by aggregating data from the $I$ closest seen contexts $\{\mathbf c^{(1)},\ldots,\mathbf c^{(I)}\}$ to $\mathbf c^\star$, where closeness is defined with respect to a chosen metric on the nuisance space.  We then construct
${R}_{\mathrm{test}}(\mathbf{c}^\star) = \cup_{i=1}^I {R}_{\mathrm{test}}(\mathbf{c}^{(i)}) $ (or $\bar{R}_{\mathrm{test}}(\mathbf{c}^\star) = \cup_{i=1}^I \bar{R}_{\mathrm{test}}(\mathbf{c}^{(i)}) $) which serves as the validation dataset for estimating $r$ at the unseen context $\mathbf{c}^\star$ with one of the two previous methods discussed.

\subsection{Model Predictive Control with Robust Conformal Prediction Regions} \label{sec:MPC} Using these prediction sets, we iteratively solve the following MPC formulation:
\begin{subequations}\label{eq:mpc}
\begin{align}
    \min_{u_{t:t+H-1}} \;& J(x,u) \\
    \text{s.t. } & x_{\tau+1} = f(x_\tau,u_\tau), \quad \tau=t,\dots,t+H-1, \label{eq:mpc_dyn}\\
    & \inf_{ (Y_0,\hdots,Y_t,\hdots,Y_{t+H}) \in \mathcal{C}_{t}}c(x_\tau, Y_{\tau}) \ge 0, \quad   \tau=t+1,\dots,t+H, \label{eq:mpc_safe}\\
    & x_\tau \in \mathcal{X},\; u_\tau \in \mathcal{U}, \quad \tau=t,\dots,t+H-1,
\end{align}
\end{subequations}
where $\mathcal{C}_{t}:=\{Y| \|\hat{Y}_{\tau \mid t,i}- Y_{\tau,i}||\le C_{\tau\mid t,i}^{\mathrm{rob}}, \; \forall (\tau,i)\in \{t+1,\hdots,t+H\}\times\{1,\hdots,N\}\}$.

 \begin{theorem}\label{Theorem_1}
    (Closed-loop control.) Given the system in (\ref{eq:system-dynamics}), a failure probability $\delta\in(0,1)$, an unknown latent factor $\zeta$, and an observable nuisance parameter $\eta(\zeta)$. Assume that the prediction region in \eqref{eq:robust_CP_regions} is valid (e.g., via Lemmas \ref{cor1} or \ref{cor2} by use of $\eta$) and that the optimization problem \eqref{eq:mpc} is feasible at each time $t\in \{0,\hdots,T-1\}$, then the closed-loop system satisfies
    \begin{align}\label{eq:guarantee}
\mathbb{P}_{\mathcal{R}_{\mathrm{test}},\mathcal{R}_{\mathrm{train}}}^{K+1}\big(c(x_t,Y_t)\ge 0, \forall t\in\{1,\hdots,T\}\big)\ge 1-\delta.
    \end{align}
    where $Y \sim \mathcal{D}_{\mathrm{test}}(\zeta)$ is the test trajectory for the latent factor $\zeta$. 
\end{theorem}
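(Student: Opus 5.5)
The argument is a set-inclusion argument: the event on which the robust prediction region \eqref{eq:robust_CP_regions} holds is contained in the event that the closed loop is safe at every time. The probability bound then transfers directly from the prediction region to safety. Fix a realization of the calibration scores $R^{(1)},\dots,R^{(K)}$ (synthetic, from $\mathcal{D}_{\mathrm{train}}(\eta)$ or $p_\theta(\bar R\mid\mathbf{c})$) and of the test trajectory $Y\sim\mathcal{D}_{\mathrm{test}}(\zeta)$. Define
\[
E:=\big\{\|\hat{Y}_{\tau\mid t,i}-Y_{\tau,i}\|\le C^{\mathrm{rob}}_{\tau\mid t,i},\ \forall (t,\tau,i)\in\mathcal{S}\big\}.
\]
By assumption, for example via Lemma~\ref{cor1} or Lemma~\ref{cor2} applied with the observed $\eta(\zeta)$, we have $\mathbb{P}^{K+1}_{\mathcal{R}_{\mathrm{test}},\mathcal{R}_{\mathrm{train}}}(E)\ge 1-\delta$. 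It therefore suffices to show that on $E$ we have $c(x_t,Y_t)\ge 0$ for all $t\in\{1,\dots,T\}$.

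\textbf{Step 1: membership of the true trajectory.} On $E$, for each MPC time $t$ the true trajectory satisfies $Y\in\mathcal{C}_t$. This is because $\mathcal{C}_t$ is exactly the restriction of $E$ to the indices $(\tau,i)\in\{t+1,\dots,t+H\}\times\{1,\dots,N\}$ at that fixed $t$. A bookkeeping point needs checking here: the MPC is solved at $t=0,\dots,T-1$, while $\mathcal{S}$ starts at $t=1$. I would either read $\mathcal{S}$ as containing every MPC time at which $\mathcal{C}_t$ is used, or handle $t=0$ by including it in $\mathcal{S}$. The nonconformity score \eqref{eq:nonconformity} is a max over $\mathcal{S}$, so this only changes the index set and not the argument.

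\textbf{Step 2: feasibility and robust constraints imply safety.} At each $t$ the problem \eqref{eq:mpc} is feasible by assumption, and the applied input $u_t$ is the first element of the optimal sequence. Since the first predicted transition is the true dynamics \eqref{eq:system-dynamics}, the state reached by the closed loop is $x_{t+1}=f(x_t,u_t)$. This coincides with the MPC's predicted state at $\tau=t+1$. Constraint \eqref{eq:mpc_safe} with $\tau=t+1$ gives
\[
\inf_{Y'\in\mathcal{C}_t}c(x_{t+1},Y'_{t+1})\ge 0 .
\]
Because $Y\in\mathcal{C}_t$ on $E$, this yields $c(x_{t+1},Y_{t+1})\ge 0$. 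Ranging over $t=0,\dots,T-1$ covers all times $1,\dots,T$.

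\textbf{Step 3: conclusion.} Combining the two steps gives
\[
E\subseteq\{c(x_t,Y_t)\ge0,\ \forall t\in\{1,\dots,T\}\},
\]
and monotonicity of probability yields \eqref{eq:guarantee}.

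The main obstacle is justifying that the event $E$ is well defined and carries the stated probability in closed loop. The predictions $\hat{Y}_{\tau\mid t}$ depend only on past observations of $Y$, not on the ego's actions, so $E$ is determined by $(Y,R^{(1)},\dots,R^{(K)})$ alone. However, if the uncontrollable agents react to the ego, the test distribution would depend on the policy. I would state explicitly that $\mathcal{D}_{\mathrm{test}}(\zeta)$ is not influenced by $x$, as in \cite{lindemann2023safe}. Under that assumption, the simultaneous coverage over all $t$ in \eqref{eq:robust_CP_regions} avoids any union bound across MPC iterations.
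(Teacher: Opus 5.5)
Your proposal is correct and is essentially the paper's argument. Validity of the simultaneous region \eqref{eq:robust_CP_regions} gives $Y\in\mathcal{C}_t$ for all MPC times with probability at least $1-\delta$, and then feasibility of \eqref{eq:mpc} together with the robust constraint \eqref{eq:mpc_safe} (used at $\tau=t+1$, where predicted and closed-loop states coincide) forces $c(x_t,Y_t)\ge 0$ on that event. Your extra care points are also correct, and the paper's one-line proof leaves them implicit: the index mismatch between the MPC times $t=0,\dots,T-1$ and $\mathcal{S}$ (which starts at $t=1$), and the assumption that the environment does not react to the ego.
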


\begin{proof}
The guarantee in \eqref{eq:guarantee} directly follows since the optimization problem \eqref{eq:mpc} is feasible at each time $t\in \{0,\hdots,T-1\}$ and since the constraint \eqref{eq:mpc_safe} uses the set $\mathcal{C}_t$ that is such that  $\mathbb{P}_{\mathcal{R}_{\mathrm{test}},\mathcal{R}_{\mathrm{train}}}^{K+1}\big( Y\in \mathcal{C}_t, \forall t\in\{1,\hdots,T\} )\ge 1-\delta$, which is guaranteed since the prediction regions in \eqref{eq:robust_CP_regions} are valid.
\end{proof}

\section{Case Study}
\label{sec:Case_Study}

 All simulations were executed on a 64-bit \texttt{x86\_64} workstation equipped with an Intel(R) Core(TM) i7-10700 CPU (2.90~GHz) and 32~GB RAM. The CasADi optimization framework from~\citep{andersson2019casadi} was employed in an Anaconda-Python environment. In the reminder, we compute efficient prediction regions with the method in Section \ref{sec:4_2} and  test and compare four instantiation of the MPC: \textbf{Case-0:} This baseline is from \cite{lindemann2023safe} which uses the nominal quantile $C$ without robustification computed using calibration data from a single distribution $\mathcal{D}_{\mathrm{train}}$ by fixing a single nuisance parameter. \textbf{Case-1:} This baseline uses the quantile $C$ without robustification, but  computed using synthetic data from $p_\theta$ under the observed $\eta$. \textbf{Case-2:} This baseline uses the robust quantile $C_{\tau\mid t,i}^{\mathrm{rob}}:=\sigma_{\tau \mid t,i}\cdot(C+\Delta_\mathsf{d}(r;R^{(1)}, \dots, R^{(K)}))$  where $r$ is estimated from test data using \citep[Algorithm 1]{aolaritei2025conformal} for the $\infty$-Wasserstein distance.  \textbf{Case-3:} This baseline uses the robust quantile $C_{\tau\mid t,i}^{\mathrm{rob}}:=\sigma_{\tau \mid t,i}\cdot C+\Delta_\mathsf{d}(r;R^{(1)}, \dots, R^{(K)})$  where $r$ is computed analytically using the bound in equation \eqref{eq:analytic:bound}. For $\mathsf{d}$, we here use the $\infty$-Wasserstein distance.

\textbf{Simulation Setup.} We use the ORCA simulator~\citep{van2008reciprocal} configured as a two-lane corridor with one ego agent and two moving obstacles \(A\) and \(B\) (see Figure \ref{fig:trajectory}). The ego starts in the right lane; obstacle \(A\) moves oppositely in the same lane, and obstacle \(B\) travels in the other lane in the same direction. The ego must safely maneuver (e.g., lane changes) to avoid collisions and reach its goal. Its dynamics follow a kinematic bicycle model discretized via forward Euler with sampling time \(\Delta=1/8\)~s~\citep{kong2015kinematic}. The ego vehicle has position $p_t:=(x_t,y_t)$, orientation $\theta_t$, and velocity $v_t$. The control inputs are the steering angle $\phi_t$ and the acceleration $a_t$.

\refstepcounter{figure}
\label{fig:trajectory}

\begin{figure}[ht]
    \centering
    \includegraphics[width=\textwidth]{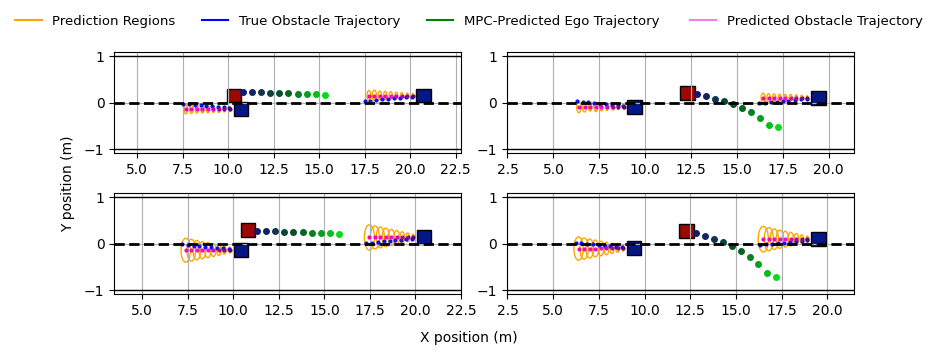}
    \par
    \justifying
    \small
    \noindent Figure \thefigure: Shown are the ego agent (red), actual (blue) and predicted (pink) trajectories of other agents (blue), along with their prediction regions (yellow). In the top row (case-0), coverage failure leads to collisions. In the bottom row (case-3), the robust prediction regions ensure safety.
\end{figure}

\textbf{Calibration Distribution.} The obstacle agents start at \(\mathbf{p}_A^{(0)}=[20,-0.5]\) and \(\mathbf{p}_B^{(0)}=[30,0.5]\). Their goals are \(\mathbf{g}_i=[x_i^{\text{goal}},y_i^{\text{goal}}]\), with \(x_A^{\text{goal}}=0\), \(x_B^{\text{goal}}=10\), and \(y_i^{\text{goal}}\sim\mathcal{U}[y_i^{\text{nom}}\!-\!0.015,\,y_i^{\text{nom}}\!+\!0.015]\), where \(y_A^{\text{nom}}=-0.5\), \(y_B^{\text{nom}}=0.5\). Velocities are sampled as \(v_i\sim\mathcal{U}[0.5,1.0]\) and adjusted by ORCA for feasible obstacle motion. This defines the \emph{training distribution}, from which \emph{test distributions} are obtained by perturbing velocities and goals as described below. 

\textbf{Introducing Distribution Shifts.} 
Each test environment \(\mathcal{E}_k\) represents a controlled shift generated by perturbing velocities \(v_i\!\sim\!\mathcal{U}[0.55,1.05]\) (common across all test sets) and varying goal latitudes while fixing \(x_A^{\text{goal}}\!=\!0\) and \(x_B^{\text{goal}}\!=\!10\). Specifically, \(\mathcal{E}_k\) is defined by \(y_A^{\text{goal}}\!\sim\!\mathcal{U}(\mathcal{Y}_A^{(k)})\) and \(y_B^{\text{goal}}\!\sim\!\mathcal{U}(\mathcal{Y}_B^{(k)})\), where \(\mathcal{Y}_A^{(k)}\!=\![-0.06\!+\!0.03(k\!-\!1),-0.03\!+\!0.03(k\!-\!1)]\) and \(\mathcal{Y}_B^{(k)}\!=\!-\mathcal{Y}_A^{(k)}\), for \(k\!=\!1,\dots,10\). Increasing \(k\) shifts the goal latitudes smoothly outward, yielding a sequence of test environments \(\{\mathcal{E}_k\}_{k=1}^{10}\) with progressively larger distributional shifts.

\textbf{Distribution Splits and Nuisance Parameters. } Out of the ten shifted environments \(\{\mathcal{E}_k\}_{k=1}^{10}\), the alternate sets \(\{\mathcal{E}_2,\mathcal{E}_4,\mathcal{E}_6,\mathcal{E}_8,\mathcal{E}_{10}\}\) are used to train the conditional diffusion model, while the remaining \(\{\mathcal{E}_1,\mathcal{E}_3,\mathcal{E}_5,\mathcal{E}_7,\mathcal{E}_9\}\) serve as unseen test distributions. Each environment is characterized by a scalar nuisance parameter \(\eta\), whose distribution varies across environments with nearly disjoint support. In our MPC setting, a short buffer interval allows estimation of \(\eta\), defined  as the mean absolute lateral velocity over the buffer: $\eta=\tfrac{1}{2T_b}\!\sum_{i\in\{A,B\}}\!\sum_{t=0}^{T_b-1}\!\left|\tfrac{y_i(t+1)-y_i(t)}{\Delta}\right|,$ where \(y_i(t)\) is the lateral position of obstacle \(i\) at time \(t\) with sampling interval \(\Delta=1/8~\mathrm{s}\). Evaluation across all \(\{\mathcal{E}_k\}_{k=1}^{10}\) environments assesses the model’s interpolation under intermediate or unseen shifts.

\textbf{Generalization of the Conditional Diffusion Model.}  We evaluate the generalization capability of our trained conditional diffusion model by measuring the empirical $2$-Wasserstein distance between synthetically generated non-conformity score samples and samples drawn from a fixed test environment. Synthetic samples are generated by varying the nuisance parameter $\eta$, while the test environment is held fixed. We denote by $\mathcal{E}_j$ a test environment characterized by nuisance parameter $\eta_j$. As shown in Figure \ref{fig:w2_vs_nuisance}, the Wasserstein discrepancy is minimized when the nuisance parameter $\eta$ used for synthetic generation matches the nuisance parameter $\eta_j$ of the corresponding test environment $\mathcal{E}_j$. This indicates that the synthetic samples most accurately approximate the test-time non-conformity score distribution when the nuisance is correctly matched. As $\eta$ deviates from $\eta_j$, the $2$-Wasserstein distance increases, reflecting a growing distributional mismatch.

\refstepcounter{figure}
\label{fig:w2_vs_nuisance}

\begin{figure}[ht]
    \centering
    \includegraphics[width=\textwidth]{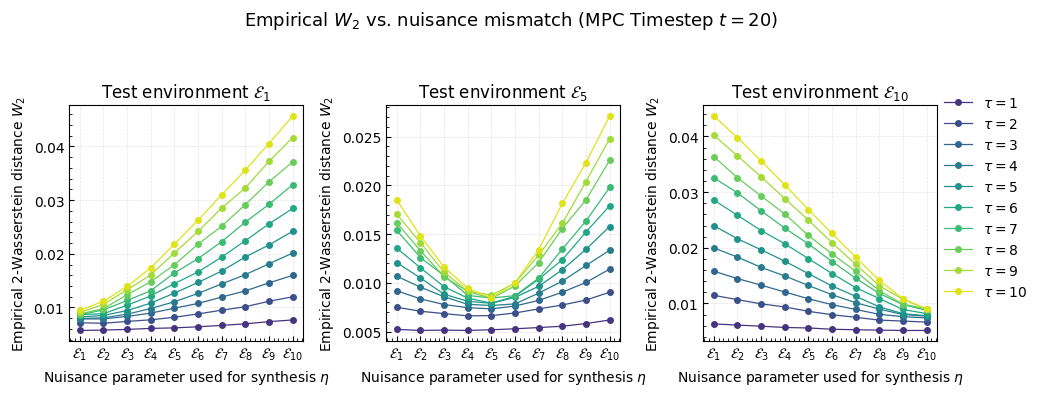}
    \par
    \justifying
    \small
    \noindent Figure \thefigure: Empirical 2-Wasserstein distance \(W_2\) between synthetically generated trajectory distributions and the true test-environment distribution as a function of the nuisance parameter $\eta$ used for synthetic generation at time $t = 20$. Each subplot corresponds to a different test environment $\mathcal{E}_j$ characterized by nuisance parameter $\eta_j$, and curves within each subplot represent different prediction horizons $\tau$. The Wasserstein distance is minimized when $\eta = \eta_j$ and increases as the nuisance parameter deviates from the test environment, indicating degraded approximation of the test-time non-conformity distribution.
\end{figure}

\textbf{Time Complexity.} At each MPC timestep, our approach incurs additional computation due to synthetic sample generation for the conditional diffusion model, beyond the computational cost of solving the MPC optimization problem. When executed without synthetic sampling, the MPC optimization alone requires approximately $42\,\mathrm{ms}$ (averaged) per timestep. Incorporating synthetic data generation increases the per-step computation time, as shown in Figure \ref{fig:time_complexity}. Despite this additional overhead, the total average computation time per MPC step remains well below a $1\,\mathrm{s}$ upper limit for practical sample sizes. In particular, for $K$ in the range of $500$--$1000$ synthetic samples—empirically sufficient to achieve a conformal miscoverage level of $\delta = 0.1$—the computation time remains under $0.5\,\mathrm{s}$. This demonstrates that the proposed method is computationally feasible for real-time MPC while providing statistically calibrated uncertainty guarantees.

\refstepcounter{figure}
\label{fig:time_complexity}

\begin{figure}[ht]
    \centering
    \includegraphics[width=0.6\textwidth]{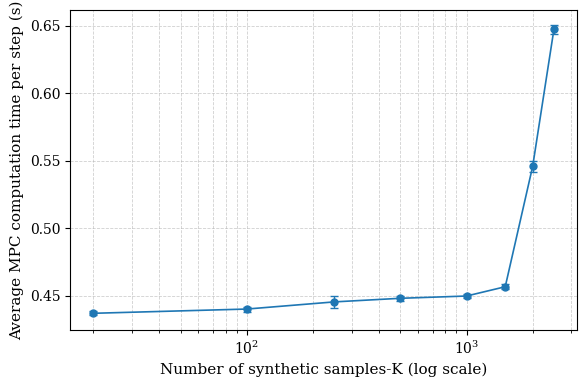}
    \par
    \justifying
    \small
    \noindent Figure \thefigure: Average computation time per MPC step versus the number of synthetic samples $K$ (log scale), including both MPC optimization and synthetic sample generation.
\end{figure}

\refstepcounter{figure}
\label{fig:safety_coverage}

\begin{figure}[ht]
    \centering
    \includegraphics[width=0.9\textwidth]{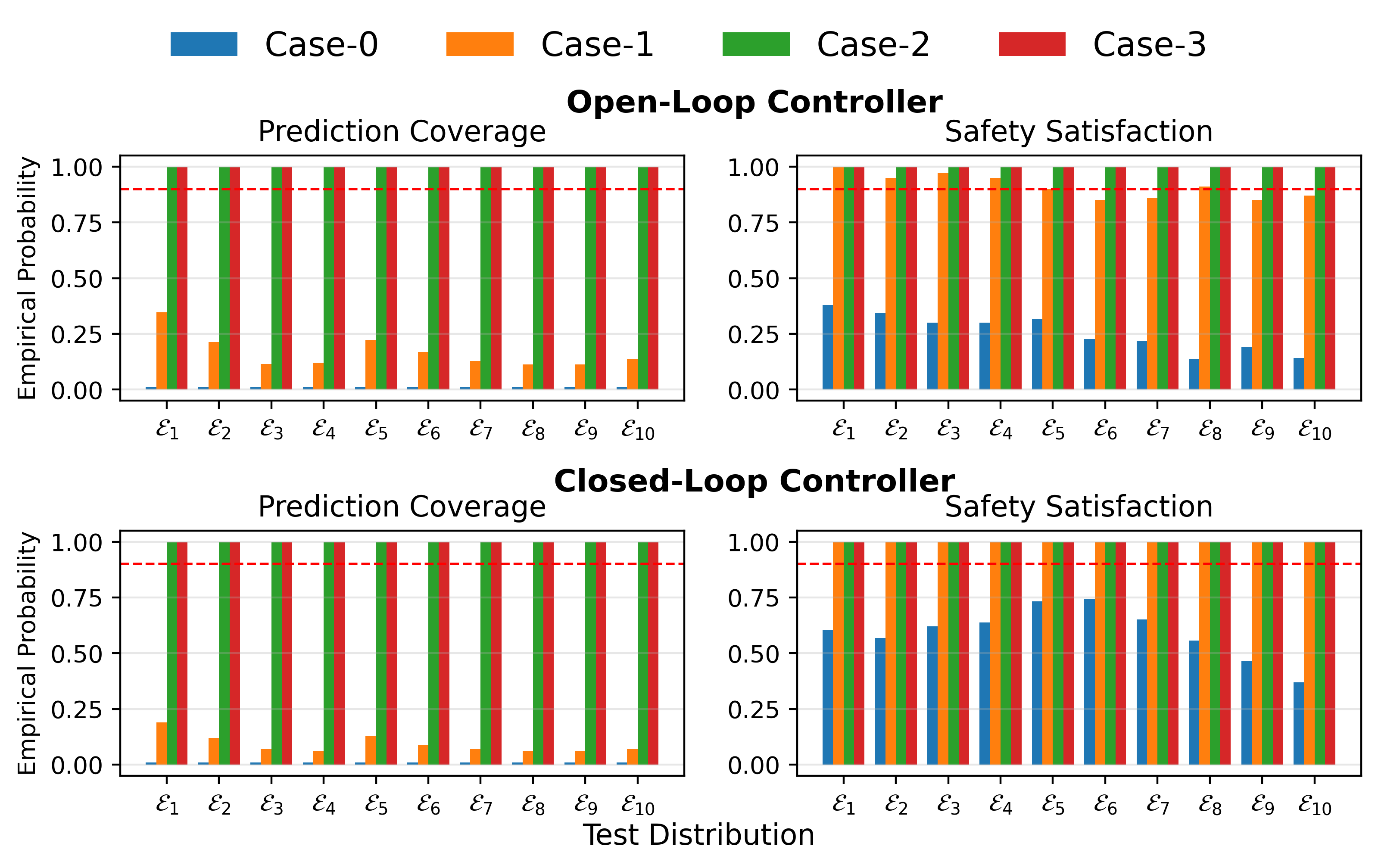}
    \par
    \justifying
    \small
    \noindent Figure \thefigure: Comparison of prediction coverages and safety across ten test environments for Cases 0-3 under open- and closed-loop controllers. The red dashed line denotes $1-\delta=0.9$.
\end{figure}

\textbf{Collision and Coverage Statistics. } Figure \ref{fig:safety_coverage} compares safety coverage $\mathbb{P}_{\mathcal{D}_{\mathrm{test}}}
      \!( c(x_t,Y_t)\ge 0,\;\forall t\in\{0,\dots,T\})$ and prediction coverage $\mathbb{P}_{\mathcal{R}_{\mathrm{test}}}
    ( \|\hat{Y}_{\tau \mid t,i}- Y_{\tau,i}||\le C_{\tau\mid t,i}^{\mathrm{rob}}, \; \forall (t,\tau,i)\in \mathcal{S}  )$  across the ten test environments  \(\{\mathcal{E}_k\}_{k=1}^{10}\) for both open-loop controller (only solving \eqref{eq:mpc} at time $t=0$)  and closed-loop controller (the MPC solving \eqref{eq:mpc} iteratively). Collisions occur only when the true obstacle trajectories fall significantly outside their predicted regions. In the open-loop setting, where no feedback correction is applied, \textbf{Case-0} shows frequent safety violations due to fixed calibration data, while \textbf{Case-1} exhibits violations despite using only synthetic data without robustification. Feedback present in the closed-loop controller mitigates these failures, reducing collisions even when coverage deteriorates. Both \textbf{Case-2} and \textbf{Case-3} maintain near-perfect safety and the target $1 - \delta = 0.9$ coverage across all environments, demonstrating that robust prediction regions effectively ensure safe operation under distributional shifts without being overly conservative, see Figure \ref{fig:trajectory}.

\section{Conclusion} We proposed a planning framework that maintains probabilistic safety guarantees for autonomous systems under distribution shifts. We achieve this by using robust conformal prediction along with data generated by conditional diffusion models which capture the natural variations in the test-time environment. By conditioning on an interpretable nuisance parameter, our model generates synthetic data that reflects the test-time environment variations, enabling safe MPC planning without (or only little) access to deployment data. Empirical evaluations in the ORCA simulator demonstrate that our planning method achieves reliable coverage and safety across both seen and unseen environments. We plan to extend this work in the future by considering distribution shifts which occur due to the response of the environment or obstacle agents to the ego behavior and planning. Another interesting direction would be to extend our setting to non-stationary environments where the environment experiences not just a single but multiple test-time distribution shifts.


\newpage
\bibliography{robustcp}

\newpage

\section{Appendix}
\label{sec:appendix}

\subsection{Robust Conformal Prediction and Computation of  $\Delta_\mathsf{d}$}
\label{app:robust_cp_details}

This section provides details on the computation and interpretation of the robustness correction term $\Delta_\mathsf{d}(r;R^{(1)}, \dots, R^{(K)})$ introduced in equation~\eqref{eq:robust_additive_form}. We distinguish between two principal formulations in the literature: the \emph{$f$-divergence ambiguity} approach~\citep{cauchois2024robust} and the \emph{L\'evy--Prokhorov (LP)} ambiguity approach~\citep{aolaritei2025conformal}.

\subsection*{(a) $f$-Divergence--Based Robust Conformal Prediction}

The $f$-divergence ambiguity formulation~\cite{cauchois2024robust} defines the admissible uncertainty set around the training score distribution as
\[
    \mathcal{U}^{f}_{r}(\mathcal{R}_{\mathrm{train}})
    = \{\, \mathcal{Q} : D_f(\mathcal{Q}\,\|\,\mathcal{R}_{\mathrm{train}}) \le r \,\},
\]
where $D_f(\cdot\|\cdot)$ denotes an $f$-divergence and $r>0$ bounds the shift magnitude.  
Let
\[
R^{(1)},\dots,R^{(K)} \overset{\text{i.i.d.}}{\sim} \mathcal{R}_{\mathrm{train}},
\qquad
R^{(0)} \sim \mathcal{R}_{\mathrm{test}}.
\]
Then, calibration scores from $\mathcal{R}_{\mathrm{train}}$ can be used to construct a robust quantile that guarantees coverage under all $\mathcal{R}_{\mathrm{test}}\!\in\!\mathcal{U}^{f}_{r}$.

\begin{lemma}[Adapted from~\cite{cauchois2024robust}] \label{lemma:fdiv_app}
Let $R^{(0)},\dots,R^{(K)}$ be independent random variables with $R^{(0)}\!\sim\!\mathcal{R}_{\mathrm{test}}$ and $R^{(1)},\dots,R^{(K)}\!\sim\!\mathcal{R}_{\mathrm{train}}$, satisfying $\mathcal{R}_{\mathrm{test}}\in \mathcal{U}^{f}_{r}(\mathcal{R}_{\mathrm{train}})$.  
For a failure probability $\delta\!\in\!(0,1)$, the guarantee $\mathbb{P}\big(R^{(0)} \le C^{\mathrm{rob}}\big) \;\ge\; 1-\delta$ holds for 
\[
    C^{\mathrm{rob}}
    := \mathrm{Quantile}_{1-\tilde{\delta}}\!\big(R^{(1)},\dots,R^{(K)}\big),
\]
where the adjusted level $\tilde{\delta}$ is obtained via one-dimensional convex programs:
\[
\begin{aligned}
    \delta_K
    &:= 1 - g\!\Big((1+\tfrac{1}{K})\,g^{-1}(1-\delta)\Big),\\
    \tilde{\delta}
    &:= 1 - g^{-1}(1-\delta_K),
\end{aligned}
\]
where the auxiliary function $g:[0,1]\!\to\![0,1]$ is defined as
\[
    g(\beta)
    := \inf\Big\{
        z\in[0,1]
        \;\Big|\;
        \beta f\!\Big(\frac{z}{\beta}\Big)
        + (1-\beta)f\!\Big(\frac{1-z}{1-\beta}\Big)
        \le r
    \Big\},
\]
and its pseudo-inverse $g^{-1}(\tau) := \sup\{\beta\in[0,1]\mid g(\beta)\le\tau\}$.
\end{lemma}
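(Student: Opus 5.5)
The plan is to reduce the robust statement to ordinary split conformal coverage under the training distribution. The reduction uses a worst-case bound on how much an $f$-divergence ball can change the probability of a single event.

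\textbf{Step 1 (worst-case transfer of probabilities).} Fix any measurable event $A$ with $\mathcal{R}_{\mathrm{train}}(A)=\beta$. We claim that every $\mathcal{Q}$ with $D_f(\mathcal{Q}\|\mathcal{R}_{\mathrm{train}})\le r$ satisfies $\mathcal{Q}(A)\ge g(\beta)$. To see this, push both measures forward through the indicator $\mathbf{1}_A$. The data-processing inequality for $f$-divergences gives $D_f(\mathrm{Bern}(\mathcal{Q}(A))\|\mathrm{Bern}(\beta))\le r$. The left-hand side is exactly $\beta f(z/\beta)+(1-\beta)f((1-z)/(1-\beta))$ with $z=\mathcal{Q}(A)$, so $z$ is feasible in the definition of $g$. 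Hence $\mathcal{Q}(A)\ge g(\beta)$.

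\textbf{Step 2 (condition on calibration data).} We apply Step 1 conditionally, not to the joint law. Write $F$ for the CDF of $\mathcal{R}_{\mathrm{train}}$. Since $R^{(0)}$ is independent of $R^{(1)},\dots,R^{(K)}$, conditioning on the calibration scores gives
$$\mathbb{P}(R^{(0)}\le C^{\mathrm{rob}}\mid R^{(1:K)})=\mathcal{R}_{\mathrm{test}}((-\infty,C^{\mathrm{rob}}])\ge g\big(F(C^{\mathrm{rob}})\big).$$
We may take $g$ convex and nondecreasing. This follows from joint convexity of the perspective map together with monotonicity in $z$; it is the standard property used in \cite{cauchois2024robust}. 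Taking expectations and applying Jensen's inequality then gives
$$\mathbb{P}(R^{(0)}\le C^{\mathrm{rob}})\ge g\big(\mathbb{E}[F(C^{\mathrm{rob}})]\big).$$

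\textbf{Step 3 (standard conformal bound).} With $C^{\mathrm{rob}}$ the empirical $(1-\tilde\delta)$-quantile, exchangeability of a fresh training draw $\tilde R^{(0)}\sim\mathcal{R}_{\mathrm{train}}$ with the calibration scores gives $\mathbb{E}[F(C^{\mathrm{rob}})]=\mathbb{P}(\tilde R^{(0)}\le C^{\mathrm{rob}})$. Vanilla CP, as recalled in Section~\ref{sec:prelim}, bounds this below by roughly $\lceil (K+1)(1-\tilde\delta)\rceil/(K+1)$. Unwinding the definitions $1-\tilde\delta=g^{-1}(1-\delta_K)$ and $\delta_K=1-g((1+1/K)g^{-1}(1-\delta))$, this lower bound is at least $g^{-1}(1-\delta)$. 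Monotonicity of $g$, together with $g(g^{-1}(\tau))\ge\tau$ (which follows from right-continuity of $g$), then yields coverage at least $1-\delta$.

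\textbf{Main obstacle.} The delicate part is Step 3: showing that the finite-sample inflation in $\delta_K$ exactly absorbs the $(1+1/K)$ correction once it has passed through $g$ and its pseudo-inverse. This requires care with ceilings, with the generalized inverse (including the endpoint cases $g^{-1}=1$ and $\tilde\delta\le 0$), and with the continuity and convexity properties of $g$. I would first try to prove monotonicity and convexity of $g$ as a separate claim. I would then reproduce the corresponding calculation in \cite{cauchois2024robust} with the paper's notation substituted in. Since the lemma is stated as adapted from that work, invoking their Corollary for this calibration step is an acceptable fallback.
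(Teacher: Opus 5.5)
Your proposal is correct and follows the argument of \cite{cauchois2024robust}; the paper itself gives no proof and imports the lemma from that work. The steps match: data processing gives $\mathcal{Q}(A)\ge g(\mathcal{R}_{\mathrm{train}}(A))$, convexity of $g$ lets Jensen carry this through the expectation over the calibration scores, and vanilla CP then secures the level $g^{-1}(1-\delta)$. Convexity together with $g\ge 0=g(0)$ also gives monotonicity, which is a cleaner justification than ``monotonicity in $z$''.

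In Step 3, the step you flagged, the bound you quote is not right for the paper's convention. For the $K$-point empirical quantile the exchangeability bound is $\lceil K(1-\tilde\delta)\rceil/(K+1)$, not $\lceil (K+1)(1-\tilde\delta)\rceil/(K+1)$. The unwinding that makes it at least $g^{-1}(1-\delta)$ is precisely $1-\tilde\delta=g^{-1}\big(g((1+\tfrac{1}{K})g^{-1}(1-\delta))\big)\ge(1+\tfrac{1}{K})g^{-1}(1-\delta)$, which holds because $g^{-1}(g(x))\ge x$.
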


\noindent\textbf{Note.}
 Both $g$ and $g^{-1}$ are convex programs in one variable and can be efficiently computed (e.g., via line search).  
In certain divergence choices, closed-form solutions exist; for example, with the total-variation generator $f(r)=\tfrac{1}{2}|r-1|$, one obtains $g(\beta)=\max(0,\beta-r)$, which yields an explicit robustness adjustment.

 Given Lemma \ref{lemma:fdiv_app}, we can easily compute $\Delta_\mathsf{d}(r;R^{(1)}, \dots, R^{(K)})$  for the $f$-divergence as
\[
    \Delta_\mathsf{d}(r;R^{(1)}, \dots, R^{(K)})
    =  C^{\mathrm{rob}} - C.
\]

\subsection*{(b) Lévy--Prokhorov--Based Robust Conformal Prediction}

The Lévy--Prokhorov (LP) ambiguity set models distribution shifts by allowing both local 
(Wasserstein-type) and global (total-variation-type) perturbations and is defined as
\begin{equation}
    \mathcal{U}_{\varepsilon,\rho}^{\mathsf{LP}}(\mathcal{R}_{\mathrm{train}})
    = \{\, \mathcal{Q} : \mathsf{LP}_{\varepsilon}(\mathcal{R}_{\mathrm{train}},\mathcal{Q}) \le \rho \,\},
\end{equation}
where the LP pseudo-metric is defined as
\[
    \mathsf{LP}_{\varepsilon}(\mathcal{R}_{\mathrm{train}},\mathcal{Q})
    = \inf_{\gamma \in \Gamma(\mathcal{R}_{\mathrm{train}},\mathcal{Q})}
      \int \mathbf{1}\{\|z_1 - z_2\| > \varepsilon\} \, d\gamma(z_1,z_2).
\]
Here, $\Gamma(\mathcal{R}_{\mathrm{train}},\mathcal{Q})$ denotes the set of all couplings with marginals $\mathcal{R}_{\mathrm{train}}$ and $\mathcal{Q}$. 
As shown in~\cite{aolaritei2025conformal}, this ambiguity set can be decomposed as
\[
    \mathcal{U}_{\varepsilon,\rho}^{\mathsf{LP}}(\mathcal{R}_{\mathrm{train}})
    = \bigcup_{\widetilde{\mathcal{P}} : W_\infty(\mathcal{R}_{\mathrm{train}}, \widetilde{\mathcal{P}}) \le \varepsilon}
      \{\, \mathcal{Q} : \mathrm{TV}(\widetilde{\mathcal{P}}, \mathcal{Q}) \le \rho \,\},
\]
highlighting that $\varepsilon$ controls local geometric shifts (through $W_\infty$) while $\rho$ allows 
a $\rho$-fraction of global mass displacement. Similar to the $f$-divergence, the calibration scores from $\mathcal{R}_{\mathrm{train}}$ can now be  used to construct a robust quantile that guarantees coverage under all $\mathcal{R}_{\mathrm{test}}\!\in\!\mathcal{U}^{\mathsf{LP}}_{r}$.

\begin{lemma}[Corollary 4.2 from \cite{aolaritei2025conformal}] \label{lemma:LP_dist}
Let $R^{(0)},\dots,R^{(K)}$ be independent random variables with $R^{(0)}\!\sim\!\mathcal{R}_{\mathrm{test}}$ and $R^{(1)},\dots,R^{(K)}\!\sim\!\mathcal{R}_{\mathrm{train}}$, satisfying $\mathcal{R}_{\mathrm{test}}\in \mathcal{U}_{\varepsilon,\rho}^{\mathsf{LP}}(\mathcal{R}_{\mathrm{train}})$.
For a failure probability $\delta\!\in\!(0,1)$, define the  set
\[
    C^{\mathrm{rob}}
    :=\mathrm{Quantile}_{1-\beta+\rho}\!\big(R^{(1)},\dots,R^{(K)}\big)  + \varepsilon 
\]
with $\beta
    =
    \delta + \frac{\delta - \rho - 2}{K}$. Then, $\mathbb{P}\big(R^{(0)} \le C^{\mathrm{rob}}\big) \;\ge\; 1-\delta$.
    

\end{lemma}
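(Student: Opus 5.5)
The plan is to use the decomposition of the L\'evy--Prokhorov ball stated just before the lemma. It lets the two kinds of perturbation be handled one at a time: first the local $W_\infty$ shift of size $\varepsilon$, then the global total-variation shift of size $\rho$. The result is a pointwise comparison between the CDF of $\mathcal{R}_{\mathrm{test}}$ and that of $\mathcal{R}_{\mathrm{train}}$. A standard order-statistic argument for i.i.d. calibration scores then finishes the proof, with the specific choice of $\beta$ making the levels match.

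\emph{Step 1 (CDF comparison).} Since $\mathcal{R}_{\mathrm{test}}\in\mathcal{U}^{\mathsf{LP}}_{\varepsilon,\rho}(\mathcal{R}_{\mathrm{train}})$, there is some $\widetilde{\mathcal{P}}$ with $W_\infty(\mathcal{R}_{\mathrm{train}},\widetilde{\mathcal{P}})\le\varepsilon$ and $\mathrm{TV}(\widetilde{\mathcal{P}},\mathcal{R}_{\mathrm{test}})\le\rho$.
\begin{itemize}
\item Take a coupling $(Z_1,Z_2)$ of $\mathcal{R}_{\mathrm{train}}$ and $\widetilde{\mathcal{P}}$ with $|Z_1-Z_2|\le\varepsilon$ almost surely. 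Then for every $q$ we have $\{Z_1\le q\}\subseteq\{Z_2\le q+\varepsilon\}$, so $\widetilde{\mathcal{P}}(R\le q+\varepsilon)\ge F_{\mathrm{train}}(q)$.
\item If the infimum defining $W_\infty$ is not attained, work with $\varepsilon'>\varepsilon$ and pass to the limit $\varepsilon'\downarrow\varepsilon$ using right-continuity of CDFs. I expect this to be the only genuinely delicate point, together with the quantile convention in Step 3.
\item Total variation then gives $\mathcal{R}_{\mathrm{test}}(R\le q+\varepsilon)\ge \widetilde{\mathcal{P}}(R\le q+\varepsilon)-\rho$.
\end{itemize}
Combining the two bounds yields, for every deterministic $q$,
\[
\mathcal{R}_{\mathrm{test}}(R\le q+\varepsilon)\;\ge\; F_{\mathrm{train}}(q)-\rho .
\]

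\emph{Step 2 (condition on calibration).} The test score $R^{(0)}$ is independent of $R^{(1)},\dots,R^{(K)}$. So conditionally on the calibration data, the bound from Step 1 applies with $q=\hat q:=\mathrm{Quantile}_{1-\beta+\rho}(R^{(1)},\dots,R^{(K)})$. Taking expectations gives
\[
\mathbb{P}(R^{(0)}\le C^{\mathrm{rob}})\;\ge\;\mathbb{E}\big[F_{\mathrm{train}}(\hat q)\big]-\rho .
\]
Write $F_{\mathrm{train}}(\hat q)=\mathbb{P}(R'\le \hat q\mid R^{(1:K)})$, where $R'\sim\mathcal{R}_{\mathrm{train}}$ is an auxiliary independent copy. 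Now $\hat q$ is the $k$-th order statistic with $k=\lceil K(1-\beta+\rho)\rceil$, or $+\infty$ if $k>K$, in which case the claim is trivial. By exchangeability of $(R',R^{(1)},\dots,R^{(K)})$, the classical conformal argument gives $\mathbb{P}(R'\le R_{(k)})\ge k/(K+1)$, and this holds even with ties.

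\emph{Step 3 (arithmetic on $\beta$).} It remains to check that $k/(K+1)\ge 1-\delta+\rho$. With $\beta=\delta+(\delta-\rho-2)/K$,
\[
K(1-\beta+\rho)=K(1-\delta+\rho)+2-\delta+\rho\;\ge\;(K+1)(1-\delta+\rho)+1,
\]
where the inequality uses $1-\delta+\rho\le 1$ (for $\rho\le\delta$; otherwise the level exceeds $1$ and $C^{\mathrm{rob}}=\infty$). So $k\ge (K+1)(1-\delta+\rho)$, and the slack of $1$ absorbs any off-by-one in the paper's quantile convention, including the corrected $(1+1/K)$ version. Plugging into Step 2 gives $\mathbb{P}(R^{(0)}\le C^{\mathrm{rob}})\ge 1-\delta+\rho-\rho=1-\delta$.
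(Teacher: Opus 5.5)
Your proof is correct. The paper itself gives no argument: it simply imports Corollary 4.2 of Aolaritei et al., and your reconstruction follows the standard route behind that result. That route is a population-level bound $\mathcal{R}_{\mathrm{test}}(R\le q+\varepsilon)\ge F_{\mathrm{train}}(q)-\rho$ over the L\'evy--Prokhorov ball, followed by the exchangeability bound $\mathbb{P}(R'\le R_{(k)})\ge k/(K+1)$ at the $\beta$-adjusted level.

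Two small remarks. First, the display in Step 3 is an identity, $K(1-\beta+\rho)=(K+1)(1-\delta+\rho)+1$, so you do not need to appeal to $1-\delta+\rho\le 1$. Second, Step 1 follows more directly from a coupling $\gamma$ with $\gamma(|z_1-z_2|>\varepsilon)\le\rho+\xi$ in the definition of $\mathsf{LP}_\varepsilon$, letting $\xi\downarrow 0$. This avoids both the decomposition and the question of whether the infimum defining $W_\infty$ is attained.
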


Looking at Lemma \ref{lemma:LP_dist}, we note that the robust quantile $C^{\mathrm{rob}}$ is obtained by  inflating the standard quantile in two ways:
(i) $\rho$ shifts the quantile \emph{level} to $1-\beta+\rho$, and  
(ii) $\varepsilon$ introduces a uniform additive displacement reflecting the largest possible change
in score induced by a local perturbation.

\subsection{Generative Modeling and Conditional Diffusion Models} 
\label{app:dif_model}
This section provides a concise overview of the diffusion modeling framework used in our approach.
Specifically, we
(a) briefly review the formulation of Denoising Diffusion Probabilistic Models (DDPMs),
(b) introduce conditional diffusion models with context variables, and
(c) describe the CARD architecture and its associated training objective.
We follow standard DDPM constructions from~\cite{ho2020denoising,song2020score}
and their conditional extension developed in~\cite{han2022card}.

\paragraph*{(a) Diffusion Models.} Diffusion models define a generative process by progressively perturbing data samples \(s_0 \sim p_{\text{data}}\) -- drawn from an unknown underlying data distribution $p_{\text{data}}$ -- with Gaussian noise, and then learning to reverse this process to recover clean samples.
The forward process constructs a sequence of latent variables \(\{s_{j}\}_{j=0}^{T_{\text{diff}}}\) via a fixed variance schedule \(\{\beta_{j}\}_{j=1}^{T_{\text{diff}}}\)
\begin{equation}
    q(s_{j} \mid s_{j-1})
    = \mathcal{N}\!\left(
        \sqrt{1 - \beta_{j}}\,s_{j-1},\, 
        \beta_{j}\mathbf{I}
    \right),
\end{equation}
which admits a closed-form expression for any diffusion step
\begin{equation}
    q(s_{j} \mid s_0)
    = \mathcal{N}\!\left(
        \sqrt{\bar{\alpha}_{j}}\,s_0,\,
        (1 - \bar{\alpha}_{j})\mathbf{I}
    \right)
\end{equation}
where $\bar{\alpha}_{j} = \prod_{s=1}^{j} (1 - \beta_s)$. The reverse (denoising) process approximates the true posterior \(q(s_{j-1} \mid s_{j})\) by a parameterized Gaussian:
\begin{equation}\label{eq:rev_diff}
    p_\theta(s_{j-1}\mid s_{j})
    =
    \mathcal{N}\!\left(
        \mu_\theta(s_{j}, j),
        \Sigma_{j}
    \right),
\end{equation}
where the mean is reparameterized using a neural network \(\epsilon_\theta(s_{j}, j)\) that predicts the added noise:
\begin{equation}
    \mu_\theta(s_{j}, j)
    =
    \frac{1}{\sqrt{\alpha_{j}}}
    \left(
        s_{j}
        - \frac{\beta_{j}}{\sqrt{1 - \bar{\alpha}_{j}}}\,
          \epsilon_\theta(s_{j}, j)
    \right).
\end{equation}
The reverse-process covariance is typically chosen as $\Sigma_j = \beta_j \mathbf{I}$. This construction defines a Markov chain that gradually transforms Gaussian noise into samples that are drawn from the learned data distribution \(p_\theta\) and approximate \(s_0 \sim p_{\text{data}}\). The process of generating new data then consists  of sampling $x_{T_{\text{diff}}}\sim \mathcal{N}(0,\mathbf{I})$ and recursively applying \eqref{eq:rev_diff}.

\paragraph*{(b) Conditional Diffusion Models (CARD design).}

To model structured dependencies or contextual information, diffusion models can be extended to represent a family of conditional distributions \(p_\theta(s_0 \mid \mathbf{c})\), where \(\mathbf{c}\) denotes conditioning covariates such as control inputs, environmental descriptors, or temporal indices. In the CARD framework~\citep{han2022card}, the key distinction is that the \emph{forward} diffusion becomes context-dependent via a pre-trained conditional mean encoder \(f_\phi(\mathbf{c})\). Specifically, the forward process is modified to drift toward the context-dependent mean:
\begin{equation}
    q(s_j \mid s_{j-1}, \mathbf{c})
    =
    \mathcal{N}\!\Big(
        \sqrt{1-\beta_j}\,s_{j-1}
        + \big(1-\sqrt{1-\beta_j}\big) f_\phi(\mathbf{c}),\;
        \beta_j \mathbf{I}
    \Big).
\end{equation}
This modification ensures that the noisy variables remain centered around a context-informed trajectory, so that the reverse process learns deviations around a meaningful conditional mean rather than reconstructing it from scratch.

The reverse process then follows the DDPM parameterization but conditioned on \(\mathbf{c}\):
\begin{equation}
    p_\theta(s_{j-1}\mid s_j,\mathbf{c})
    =
    \mathcal{N}\!\big(
        \mu_\theta(s_j,j,f_\phi(\mathbf{c})),\;
        \Sigma_j
    \big),
\end{equation}
where the mean is expressed as
\begin{equation}
    \mu_\theta(s_j,j,f_\phi(\mathbf{c}))
    =
    \frac{1}{\sqrt{\alpha_j}}
    \left(
        s_j
        -
        \frac{\beta_j}{\sqrt{1-\bar{\alpha}_j}}\,
        \epsilon_\theta(s_j,j,f_\phi(\mathbf{c}))
    \right).
\end{equation}
The learnable parameters \(\theta\) inhabit the noise-prediction network \(\epsilon_\theta\), whereas \(f_\phi(\mathbf{c})\) is provided by a pre-trained encoder. The reverse-process covariance $\Sigma_j$ has the same definition as before, typically chosen as $\Sigma_j = \beta_j \mathbf{I}$.

\paragraph*{(c) Training Objective.}
The noise-prediction network $\epsilon_\theta$ is trained with a conditional DDPM loss adapted to the CARD forward process:
\[
\mathcal{L}_{\mathrm{diff}}(\theta)
=
\mathbb{E}_{s_0,\mathbf{c},j,\varepsilon}
\Big[
    \big\|\varepsilon - \epsilon_\theta\big(s_j,j,f_\phi(\mathbf{c})\big)\big\|_2^2
\Big],
\]
where $s_j$ is sampled from the CARD forward kernel
\[
s_j
= \sqrt{\bar\alpha_j}\,s_0
  + (1-\sqrt{\bar\alpha_j})\,f_\phi(\mathbf{c})
  + \sqrt{1-\bar\alpha_j}\,\varepsilon,\qquad
\varepsilon\sim\mathcal{N}(0,I).
\]
The encoder $f_\phi$ is typically \emph{pre-trained} (and then fixed during diffusion training) using a simple regression objective
\[
\mathcal{L}_{\mathrm{enc}}(\phi)
=
\mathbb{E}_{s_0,\mathbf{c}}\big[\|s_0 - f_\phi(\mathbf{c})\|_2^2\big],
\]
so that $f_\phi(\mathbf{c})\approx\mathbb{E}[s_0\mid\mathbf{c}]$, using a (possibly disjoint) subset of the same data distribution employed to train the CARD model.

When the conditioning variable is omitted, the formulation reduces to the standard, unconditional diffusion model. Although the model is trained via noise regression, this objective is algebraically equivalent to denoising score matching for the DDPM/CARD forward process; the precise conversion used for the analytical bound is provided in Appendix~\ref{app:analytical_bound}.

\subsection{Analytical Error Bound for CARD}
\label{app:analytical_bound}
For a given context \(\mathbf{c}\), \cite{li2025non} obtained a computable upper bound on the 2-Wasserstein distance
\[
r = W_2\!\big(q(\cdot\mid \mathbf{c}),\,p_\theta(\cdot\mid \mathbf{c})\big),
\]
where \(q(\cdot\mid \mathbf{c})\) denotes the unknown conditional data distribution and \(p_\theta(\cdot\mid \mathbf{c})\) is the conditional distribution learned by the CARD model.

\medskip

\noindent\textbf{Strategy and theoretical foundation.} We rely on a result from~\cite{li2025non}, which upper bounds the 2-Wasserstein distance between the two aforementioned distributions in terms of the score-matching error and certain regularity constants. To apply this result to the CARD model used in our framework, we proceed in three steps: (a) we relate the noise-matching training objective used in CARD to the score-matching objective that appears in the result from~\cite{li2025non}; (b) we investigate the required regularity assumptions and explicitly compute the associated constants for the CARD dynamics; and (c) we specialize the resulting Wasserstein bound to our conditional setting and present it as a theorem tailored to our model.

\paragraph*{(a) Equivalence of noise and score objectives.}

CARD trains a denoising network by regressing the additive noise \(\varepsilon\) (the
\(\varepsilon\)-prediction or ``noise'' objective), whereas the analytical error bound in~\cite{li2025non} is formulated in terms of a score-matching objective. For DDPM-style forward processes, these two objectives are  equivalent. We make this equivalence explicit below in order to connect the CARD training error to the quantity appearing in the Wasserstein error bound from ~\cite{li2025non}.

\textbf{Forward marginal and notation.} For both DDPM and CARD, the forward diffusion marginal at diffusion step \(j\) is
Gaussian. To simplify notation, we introduce the shorthand\footnote{Note that we introduce  the expression $\kappa$ for concise writing and has no other significance in related literature.}
\[
\kappa_j(s_0,f_\phi(\mathbf{c}))=\sqrt{\bar\alpha_j}\,s_0+(1-\sqrt{\bar\alpha_j})f_\phi(\mathbf c)\
\]
which represents the context-dependent mean of the forward process. With this notation, the conditional density of \(s_j\) given \((s_0,\mathbf{c})\) is
\[
q(s_j \mid s_0,\mathbf{c})
=
\mathcal{N}\!\bigl(
\kappa_j(s_0,f_\phi(\mathbf{c})),\,
(1-\bar\alpha_j)\mathbf I
\bigr),
\]
that is, the value of the Gaussian probability density with mean \(\kappa_j(s_0,\mathbf c)\) and covariance \((1-\bar\alpha_j)\mathbf I\).

\textbf{Score of the forward marginal.} The \emph{score} of a distribution is defined as the gradient of its log-density with respect to the variable of interest. For the Gaussian density above, this is
\[
\log q(s_j \mid s_0,\mathbf{c})
=
-\frac{1}{2(1-\bar\alpha_j)}
\|s_j-\kappa_j(s_0,f_\phi(\mathbf{c}))\|^2
+\text{const}.
\]
Differentiating with respect to \(s_j\) yields the \emph{score} as expressed below:
\[
\nabla_{s_j}\log q(s_j \mid s_0,\mathbf{c})
=
-\frac{s_j-\kappa_j(s_0,f_\phi(\mathbf{c}))}{1-\bar\alpha_j}.
\]

\textbf{Sampling interpretation (reparameterization).} The expression for $q(s_j \mid s_0,\mathbf{c})$ defined previously describes the distribution of \(s_j\). Equivalently, a random sample \(s_j\) drawn from this Gaussian distribution can be written using the standard reparameterization of a normal random variable as
\[
s_j
=
\kappa_j(s_0,f_\phi(\mathbf{c}))
+
\sqrt{1-\bar\alpha_j}\,\varepsilon,
\qquad
\varepsilon\sim\mathcal{N}(0,\mathbf I).
\]
This representation does not introduce a new assumption; it is simply an explicit way of expressing samples from
\(\mathcal{N}(\kappa_j,(1-\bar\alpha_j)\mathbf I)\). Substituting this expression for \(s_j\) into the score yields the key noise–score identity:
\[
\nabla_{s_j}\log q(s_j\mid s_0,\mathbf{c})
= -\frac{1}{\sqrt{1-\bar\alpha_j}}\,\varepsilon.
\]

\textbf{From noise regression to score matching.} As previously mentioned, CARD is trained by minimizing the noise regression error
\[
\|\varepsilon_\theta(s_j,j,f_\phi(\mathbf{c}))-\varepsilon\|^2.
\]
Using the noise-score identity above, the noise variable can be written in terms of the score as
\[
\varepsilon
=
-\sqrt{1-\bar\alpha_j}\;
\nabla_{s_j}\log q(s_j\mid s_0,\mathbf{c}).
\]
Substituting this expression into the noise regression error yields
\[
\|\varepsilon_\theta(s_j,j,f_\phi(\mathbf{c})) - \varepsilon\|^2
=
\|\varepsilon_\theta(s_j,j,f_\phi(\mathbf{c}))
+
\sqrt{1-\bar\alpha_j}\;
\nabla_{s_j}\log q(s_j\mid s_0,\mathbf{c})\|^2.
\]
We now define the score estimator implicitly induced by the noise-prediction network as
\[
s_\theta(s_j,j,f_\phi(\mathbf{c}))
=
-\frac{1}{\sqrt{1-\bar\alpha_j}}\,
\varepsilon_\theta(s_j,j,f_\phi(\mathbf{c})).
\]
If the model predicts the score \(s_\theta(s_j,j,f_\phi(\mathbf{c}))\) or the noise \(\varepsilon_\theta(s_j,j,f_\phi(\mathbf{c}))\), the identity above implies the pointwise relation. With this definition, the error term becomes
\[
\|\varepsilon_\theta(s_j,j,f_\phi(\mathbf{c})) - \varepsilon\|^2
=
\|-\sqrt{1-\bar\alpha_j}\,
\bigl(
s_\theta(s_j,j,f_\phi(\mathbf{c})) - \nabla_{s_j}\log q(s_j\mid s_0,\mathbf{c})
\bigr)\|^2.
\]
Taking squared expectations on both sides gives
\[
\mathbb{E}\!\left[\|\varepsilon_\theta(s_j,j,f_\phi(\mathbf{c}))-\varepsilon\|^2\right]
=
(1-\bar\alpha_j)\,
\mathbb{E}\!\left[
\|s_\theta(s_j,j,f_\phi(\mathbf{c}))-\nabla_{s_j}\log q(s_j\mid s_0,\mathbf{c})\|^2
\right].
\]
Denoting the noise mean-squared error by \(E(j)=\mathbb{E}[\|\varepsilon_\theta(s_j,j,f_\phi(\mathbf{c}))-\varepsilon\|^2]\) and the score mean-squared error by \(H(j)=\mathbb{E}[\|s_\theta(s_j,j,\mathbf{c})-\nabla\log q(s_j\mid s_0,f_\phi(\mathbf{c}))\|^2]\), we obtain the exact relation
\[
H(j)
=
\frac{E(j)}{1-\bar\alpha_j}.
\]
Thus, the noise-regression objective used to train CARD is exactly equivalent, up to a known scaling factor, to the score-matching error required by the Wasserstein error bound from \cite{li2025non}. Therefore, inserting \(H_j\) into \cite[Theorem 1]{li2025non} results directly in Theorem~\ref{thm:5} (stated below) by translating  the score-regression quantity  into a noise-regression empirical error.

\paragraph*{(b) Assumptions and regularity conditions.} The analytical bound of~\cite{li2025non} requires mild regularity conditions on the diffusion dynamics and the associated score functions. In particular, the bound depends on two step-dependent Lipschitz constants: \(L_1(j)\) (associated with the forward diffusion drift) and \(L_2(j)\) (associated with the score function) where \(j \in \{1,\dots,T_{\sf diff}\}\) denotes the diffusion step for a total of $T_{\sf diff}$ diffusion steps. The dependence on the diffusion step \(j\) arises naturally because both the drift and the score vary across noise levels in the diffusion process. Specifically, the assumptions are:
(i) a one-sided Lipschitz condition on the forward drift \(b_j(s) = -\tfrac{1}{2}\beta_j (s - f_\phi(\mathbf{c}))\) with constant \(L_1(j)\); (ii) a one-sided Lipschitz condition on both the true and learned score functions with constant \(L_2(j)\);
(iii) a bounded diffusion variance schedule \(\{\beta_j\}\); and (iv) consistency of the pre-trained conditional mean encoder, namely that for any \(\delta>0\) there exists \(a_0\) such that for all training epochs
\(a>a_0\), $\mathbb{P}\!\left(
\|f_\phi(\mathbf{c})-\mathbb{E}[s_0\mid \mathbf{c}]\|_2<\delta
\right)\to 1$. These conditions match the regularity assumptions required
in~\cite{li2025non}.

\textbf{Analytical form of the drift Lipschitz constant \(L_1(j)\).}
The analytical bound of~\cite{li2025non} assumes that the forward drift satisfies a Lipschitz-type regularity condition. In the CARD formulation, the forward diffusion drift at diffusion step \(j\) is
given by $b_j(s) = -\tfrac{1}{2}\beta_j \big(s - f_\phi(\mathbf{c})\big)$. For any two states \(s_a,s_b \in \mathbb{R}^d\), we have $\|b_j(s_a)-b_j(s_b)\|^2
=
\left\|
-\tfrac{1}{2}\beta_j (s_a-s_b)
\right\|^2
=
\frac{\beta_j^2}{4}\,\|s_a-s_b\|^2$.
Thus, the drift satisfies the squared-norm Lipschitz inequality $\|b_j(s_a)-b_j(s_b)\|^2
\;\le\;
L_1(j)\,\|s_a-s_b\|^2$,
with the explicit analytical constant $L_1(j)=\frac{\beta_j^2}{4}$ which we use in our experiments. This verifies the required regularity condition for the forward diffusion drift used in the Wasserstein bound.

\textbf{Estimation of the score Lipschitz constant \(L_2(j)\).}
Unlike the drift, the score function is represented by a neural network $s_\theta$ and is therefore not available in closed form. The constant \(L_2(j)\) quantifies the one-sided Lipschitz continuity of the learned score function $s_\theta(s_j,j,f_\phi(\mathbf{c}))$  and is defined through the inequality
\[
\bigl(s_\theta(s_a,j,f_\phi(\mathbf{c}))-s_\theta(s_b,j,f_\phi(\mathbf{c}))\bigr)^\top (s_a-s_b)
\le
L_2(j)\,\|s_a-s_b\|^2,
\]
for all noisy states \(s_a,s_b\) at diffusion step \(j\) under context
\(\mathbf{c}\).
In our experiments, we estimate \(L_2(j)\) empirically. Therefore, we sample pairs of noisy states \((s_a,s_b)\) are sampled from the forward marginal \(q(s_j\mid s_0, \mathbf{c})\), evaluate the corresponding score network outputs, and compute the ratio
\[
\frac{
\bigl(s_\theta(s_a,j,f_\phi(\mathbf{c}))-s_\theta(s_b,j,f_\phi(\mathbf{c}))\bigr)^\top (s_a-s_b)
}{
\|s_a-s_b\|^2
}.
\]
Subsequently, we  choose the maximum value of this quantity over all the random pairs to obtain an estimate of \(L_2(j)\).

\paragraph*{(c) The Wasserstein error bound.} With the previous definitions and explanations, we are now ready to state \cite[Theorem 1]{li2025non}, which we have used in the main part of the paper.

\begin{theorem}[Analytical 2-Wasserstein Bound,~\cite{li2025non}]\label{thm:5}
Let \(q(\cdot\mid \mathbf{c})\) denote the true conditional data distribution and \(p_\theta(\cdot\mid \mathbf{c})\) the conditional distribution induced by a diffusion-based generative model trained via denoising or score matching. Suppose the drift and score functions satisfy one-sided Lipschitz and smoothness conditions with constants \(L_1(j)\) and \(L_2(j)\).
Then, the 2-Wasserstein distance between the true and the learned conditional distribution satisfies

\begin{align}
W_2\!\big(q(\cdot\mid \mathbf{c}),\,p_\theta(\cdot\mid \mathbf{c})\big)
\;\le\;
\underbrace{\sum_{j=1}^{T_{\sf diff}}\! \beta_j\, M(j)\, \sqrt{H(j)}}_{\text{score approximation error}}
\;+\;
\underbrace{
\Bigg(
\sum_{j=1}^{T_{\sf diff}}
\beta_j^2
\exp\!\Big(\sum_{s\le j} (L_1(s)+L_2(s)\beta_s)\Big)
\Bigg)^{1/2}
}_{\text{residual discretization term}}, \label{eq:full-bound}
\end{align}
where \(H(j)=E(j) / (1-\bar\alpha_j)\) and $E(j)=\mathbb{E}
\big[\|\varepsilon_\theta(s_j,j,f_\phi(\mathbf{c}))-\varepsilon\|^2\big]$ is the per-step noise MSE, 
\(M(j) = \exp\!\Big(\sum_{s\le j}\! (L_1(s)+L_2(s)\beta_s)\Big)\), 
and \(\beta_t\) are the diffusion noise coefficients. 
\end{theorem}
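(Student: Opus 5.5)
The plan is to take the bound of \cite[Theorem 1]{li2025non} as given in its native, score-matching form and to transport it to the CARD setting. Two translations are needed, both prepared in parts (a) and (b) of this appendix.

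\textbf{Step 1: from score error to noise error.} The first step rewrites the score-matching error appearing in \cite{li2025non} in terms of the noise-regression error that CARD actually minimizes. Part (a) gives the exact identity $H(j)=E(j)/(1-\bar\alpha_j)$. It follows from three facts: the forward marginal $q(s_j\mid s_0,\mathbf{c})$ is Gaussian with mean $\kappa_j$ and covariance $(1-\bar\alpha_j)\mathbf{I}$; the reparameterization gives $\nabla_{s_j}\log q=-\varepsilon/\sqrt{1-\bar\alpha_j}$; and the induced score estimator is $s_\theta=-\varepsilon_\theta/\sqrt{1-\bar\alpha_j}$. Substituting this identity into the score-approximation term of \cite{li2025non} yields $\sum_j \beta_j M(j)\sqrt{H(j)}$ with $H(j)$ expressed through $E(j)$.

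\textbf{Step 2: verifying the regularity hypotheses.} For (i), the CARD drift $b_j(s)=-\tfrac12\beta_j(s-f_\phi(\mathbf{c}))$ is affine, so it is Lipschitz with $L_1(j)=\beta_j^2/4$. The context-dependent shift $f_\phi(\mathbf{c})$ cancels in differences, so conditioning on $\mathbf{c}$ does not alter the constant. This is the key observation that lets the unconditional theorem be applied pointwise in $\mathbf{c}$. Condition (ii) on the scores is taken as a hypothesis of the theorem, with constant $L_2(j)$. Condition (iii) holds because the schedule $\{\beta_j\}$ is finite and bounded. Condition (iv), consistency of the encoder, ensures that the terminal distribution of the forward process is centered at $f_\phi(\mathbf{c})$ and matches the reverse initialization; this removes any initialization mismatch term.

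\textbf{Step 3: conditioning argument.} We fix $\mathbf{c}$. Both the forward CARD chain and the learned reverse chain are then ordinary (unconditional) diffusions on $\mathbb{R}^d$ after the change of variables $\tilde s=s-f_\phi(\mathbf{c})$. This shift is an isometry, so it preserves $W_2$. Applying \cite[Theorem 1]{li2025non} to the shifted processes gives the two terms in \eqref{eq:full-bound}. The first is the score term with growth factor $M(j)=\exp(\sum_{s\le j}(L_1(s)+L_2(s)\beta_s))$, which arises from a discrete Grönwall argument on the synchronous coupling of the two chains. The second is the residual discretization term.

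\textbf{Main obstacle.} The hardest step is the shift in Step 3. The change of variables $\tilde s=s-f_\phi(\mathbf{c})$ turns the CARD forward kernel into standard DDPM form only if the true conditional mean equals $f_\phi(\mathbf{c})$. In practice they differ, and this discrepancy is exactly what assumption (iv) controls. My first approach would be to handle it by a triangle inequality in $W_2$, absorbing the vanishing encoder error into the bound in the limit of training epochs.
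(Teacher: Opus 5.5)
Your Steps 1 and 2 match the paper's parts (a) and (b): the conversion $H(j)=E(j)/(1-\bar\alpha_j)$, the constant $L_1(j)=\beta_j^2/4$, and $L_2(j)$ taken as a hypothesis. After that the paper simply invokes \cite[Theorem 1]{li2025non}. Your translation in Step 3 is a correct and more explicit way of seeing why the conditional chain fits that theorem.

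What is wrong is your ``main obstacle''. Fix $\mathbf{c}$ and take \emph{any} value of $f_\phi(\mathbf{c})$. With $\tilde s_j=s_j-f_\phi(\mathbf{c})$, the CARD kernel gives
\[
\tilde s_j=\sqrt{1-\beta_j}\,\tilde s_{j-1}+\sqrt{\beta_j}\,\xi_j,
\qquad
\tilde s_j=\sqrt{\bar\alpha_j}\,\tilde s_0+\sqrt{1-\bar\alpha_j}\,\varepsilon .
\]
This is exactly the DDPM forward chain started from the translated data law of $\tilde s_0=s_0-f_\phi(\mathbf{c})$. The following all hold regardless of the encoder:
\begin{itemize}
\item DDPM does not need centred data, and none of the hypotheses (i)--(iii) refers to the location of the data.
\item The drift becomes $-\tfrac12\beta_j\tilde s$ and the scores are merely translated, so $L_1(j)$, $L_2(j)$ and $H(j)$ are unchanged.
\item Translating both laws by the same vector leaves $W_2$ unchanged.
\end{itemize}
Hence the encoder error $f_\phi(\mathbf{c})-\mathbb{E}[s_0\mid\mathbf{c}]$ never enters the reduction.

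Your Step 2 claim about (iv) has a similar problem. The centring of the terminal law near $f_\phi(\mathbf{c})$ comes from the kernel itself: its mean is $\sqrt{\bar\alpha_{T_{\sf diff}}}\,\mathbb{E}[s_0\mid\mathbf{c}]+(1-\sqrt{\bar\alpha_{T_{\sf diff}}})f_\phi(\mathbf{c})$. Condition (iv) could at most shrink a finite-$T_{\sf diff}$ initialization mismatch, not remove it. The paper never uses (iv) in any derivation; it only lists it among the standing hypotheses of \cite{li2025non}.

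This matters because your proposed remedy would break the statement. A triangle inequality in $W_2$ through a reference law built from $\mathbb{E}[s_0\mid\mathbf{c}]$ adds an encoder-error term. Assumption (iv) controls that term only in probability, and only as the number of training epochs grows. You would then prove \eqref{eq:full-bound} plus an extra additive term, or only an asymptotic statement. That is weaker than the finite inequality claimed, which contains no such term.

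Drop that step. With the exact translation above, your argument is complete at the level of the paper's: the noise-to-score conversion, the check of $L_1(j)$, and a direct application of \cite[Theorem 1]{li2025non}.
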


We note that under the regularity assumptions of~\cite{li2025non}, the residual discretization term vanishes as \(T_{\sf diff}\to\infty\) or for sufficiently small \(\beta_t\), reducing the bound to the dominant first term which we stated in the main part of the paper. Finally, to map Theorem \ref{thm:5} to our setting in the main part of the paper, we can simply view  $\bar{\mathcal{R}}_{\mathrm{test}}(\zeta)$ as $q(\cdot\mid \mathbf{c})$.

\subsection{Analytical Robustification of the Conformal Threshold}
\label{app:analytical_bound_CP}

This section explains how the Wasserstein error bound from Theorem \ref{thm:5} for the conditional diffusion model can be used to obtain $C_{\tau\mid t,i}^{\mathrm{rob}}$ that makes Lemma \ref{cor2} valid. The argument proceeds in three steps: (a) we recall how to obtain Wasserstein bounds of Lipschitz continuous functions, 
(b) we relate Wasserstein bounds from the model $p_\theta$ to Wasserstein bounds of the non-conformity scores,
and (c) we use the resulting bound to construct the constant $C_{\tau\mid t,i}^{\mathrm{rob}}$ that makes Lemma \ref{cor2} valid.

\paragraph{(a) Wasserstein bounds of Lipschitz continuous functions.}
Let $\mathcal{D}$ and $\mathcal{D}_0$ denote two probability distributions, where $\mathcal{D}$ could represent the true underlying distribution  while $\mathcal{D}_0$ could represent the learned distribution, approximating $\mathcal{D}$. Let $f$ be an $L$-Lipschitz continuous function with respect to the underlying metrics that, if applied to samples from $\mathcal{D}$ and $\mathcal{D}_0$, transforms the distributions $\mathcal{D}$ and $\mathcal{D}_0$ into the pushforward distributions $\mathcal{R}$ and $\mathcal{R}_0$, respectively. Existing bounds for $L$-Lipschitz continuous functions (see, e.g.,~\cite{villani2008optimal}) then guarantee that 
\[
W_2(\mathcal{R}, \mathcal{R}_0)
\;\le\;
L\, W_2(\mathcal{D}, \mathcal{D}_0).
\]

\paragraph{(b) Propagation of Wasserstein bound to the nonconformity score distribution.} Recall now that the original non-conformity score in Section \ref{sec:4_1}  was defined as
\begin{align}\label{eq:nonconformity_CL_appendix}
    R^{(k)}=\max_{(t,\tau,i)\in\mathcal{S}} 
    \frac{\|\hat{Y}_{\tau \mid t,i}^{(k)}- Y_{\tau,i}^{(k)}\|}{\sigma_{\tau \mid t,i}}.
\end{align}
In Section \ref{sec:4_2} we learned a conditional diffusion model $p_\theta(\bar{R} \mid \mathbf{c})$ that predicts  
$\|\hat{Y}_{\tau \mid t,i}^{(k)}- Y_{\tau,i}^{(k)}||$ for trajectories  $Y^{(k)}$. This way, we approximate the nonconformity score in  \eqref{eq:nonconformity_CL_appendix} as 
\begin{align}\label{eq:appendix_nonconformity_42}
    R^{(k)}=\max_{(t,\tau,i)\in\mathcal{S}} \frac{\bar{R}_{\tau|t,i}^{(k)}}{\sigma_{\tau \mid t,i}}
\end{align}
where $\bar{R}_{\tau|t,i}^{(k)}$ is generated by $p_\theta(\bar{R} \mid \mathbf{c})$.

Suppose next that the distribution of the learned conditional generative model $p_\theta(\bar{R} \mid \mathbf{c})$ corresponding to $ 
\bar{R}_{\tau|t,i}^{(k)}$ satisfies a Wasserstein bound of \(\varepsilon_W\) with the distribution  $\bar{\mathcal{R}}_{\mathrm{test}}(\zeta)$ corresponding to $
\|\hat{Y}_{\tau\mid t,i}^{(k)}-Y_{\tau,i}^{(k)} \|$, e.g., obtained using Theorem \ref{thm:5}. This means that
\[
W_2\left(\bar{\mathcal{R}}_{\mathrm{test}}(\zeta),p_\theta(\bar{R} \mid \mathbf{c})
\right)
\le \varepsilon_W.
\]

Next note that the function $f(s)=s/\sigma_{\tau\mid t,i}$ is $1/\sigma_{\tau\mid t,i}$-Lipschitz continuous. Similarly, the function $f(s)=\max_{(t,\tau,i)\in\mathcal{S}} s/\sigma_{\tau\mid t,i}$ is $1/\sigma_{\min}$-Lipschitz continuous where $\sigma_{\min}:=\min_{(t,\tau,i)\in\mathcal{S}}\sigma_{\tau\mid t,i}$. Finally, let $\bar{\mathcal{R}}_{\mathrm{test}}(\zeta)$ and  $p_\theta(\bar{R} \mid \mathbf{c})$ correspond to $\mathcal{D}$ and $\mathcal{D}_0$, respectively, while $\mathcal{R}$ and $\mathcal{R}_0$ follow equations \eqref{eq:nonconformity_CL_appendix} and \eqref{eq:appendix_nonconformity_42}  so that they are induced by $\mathcal{D}$ and $ \mathcal{D}_0$, respectively. Then, we can use the previous Wasserstein bounds for Lipschitz continuous functions and conclude that 
\[
W_2\left(\mathcal{R},\mathcal{R}_0
\right)
\le \varepsilon_W/\sigma_{\min}.
\]

By this result it follows that any Wasserstein error of  $\varepsilon_W$ in the data distribution induces at most a proportional Wasserstein error of $\varepsilon_W/\sigma_{\min}$ in the resulting non-conformity score distribution. This result provides a principled mechanism to quantify how uncertainty in the data-generating process propagates through the conformal mapping.

\paragraph{(c) Robust conformal threshold.} Let us now interpret $\mathcal{R}_{\mathrm{test}}$ and $ \mathcal{R}_{\mathrm{train}}$ from Lemma \ref{cor2} as $\mathcal{R}$ and $ \mathcal{R}_0$, respectively. In this case, we know that $r$ in Lemma \ref{cor2} is $\varepsilon_W/\sigma_{\min}$ so that the choice of  
\begin{equation}
C_{\tau\mid t,i}^{\mathrm{rob}}
:=
\sigma_{\tau\mid t,i}
\Big(
C
+
\Delta_{\mathsf d}\!\left(\tfrac{\varepsilon_W}{\sigma_{\min}};\,
R^{(1)},\dots,R^{(K)}\right)
\Big)
\end{equation}
makes Lemma \ref{cor2} valid.

\end{document}